\newtheorem{theorem}{Theorem}
\newtheorem{definition}{Definition}
\newtheorem{lemma}{Lemma}
\newcommand{\subsum}{\textsc{Subset Sum}}
\newcommand{\rss}{\textsc{Restricted Subset Sum}}
\newcommand{\rxc}{\textsc{Restricted Exact Cover by 3-Sets}}
\newcommand{\knapsack}{\textsc{Knapsack}}
\newcommand{\sign}{\operatorname{sign}}
\newcommand{\NP}{NP}
\newcommand{\coNP}{coNP}
\newcommand{\NPneqcoNPpoly}{\mbox{\NP{} $\not\subseteq$ \coNP/poly}}
\newcommand{\problemdef}[3]{%
  \medskip
  \hspace{-1em}
    \fbox { \parbox { .95\textwidth} {\vspace{0.1cm}
    \textsc{{\hspace{-0.5em} \textsc{#1}}}
    \vspace{-0.2cm}
    \begin{description}
        \item[Input: ] #2
        \vspace{-0.2cm}
        \item[Task:]\hspace{1mm} #3
    \end{description}
    \vspace{-0.2cm}}}
    \medskip
}
\crefname{claim}{Claim}{Claims}
\Crefname{claim}{Claim}{Claims}
\newcommand{\etal}{et al.}
\newcommand{\ie}{i.e., }
\def\equationautorefname~#1\null{Equation~(#1)\null}
\crefname{equation}{ILP}{ILPs}
\Crefname{equation}{ILP}{ILPs}
\newenvironment{ILP}{\begin{equation}}{\end{equation}}
\author{Klaus Heeger\footnote{Department of Industrial Engineering and Management, Ben-Gurion~University~of~the~Negev, Beer-Sheva, Israel. \texttt{heeger@post.bgu.ac.il}.} \and
Danny Hermelin\footnote{Department of Industrial Engineering and Management, Ben-Gurion~University~of~the~Negev, Beer-Sheva, Israel. \texttt{hermelin@bgu.ac.il}.} \and
Matthias Mnich\footnote{Hamburg University of Technology, Institute for Algorithms and Complexity, Hamburg, Germany. \texttt{matthias.mnich@tuhh.de}. Supported by DFG grant MN 59/4-1.} \and
Dvir Shabtay\footnote{Department of Industrial Engineering and Management, Ben-Gurion~University~of~the~Negev, Beer-Sheva, Israel. \texttt{dvirs@bgu.ac.il}.}}
\title{\bfseries \LARGE No Polynomial Kernels for Knapsack\footnote{Supported by the ISF, grant No.~1070/20.}}
\date{}
\begin{document}

\maketitle

\begin{abstract}
This paper focuses on kernelization algorithms for the fundamental \knapsack\ problem. A kernelization algorithm (or kernel) is a polynomial-time reduction from a problem onto itself, where the output size is bounded by a function of some problem-specific parameter. Such algorithms provide a theoretical model for data reduction and preprocessing and are central in the area of parameterized complexity. In this way, a kernel for \knapsack\ for some parameter $k$ reduces any instance of \knapsack\ to an equivalent instance of size at most $f(k)$ in polynomial time, for some computable function $f(\cdot)$. When $f(k)=k^{O(1)}$ then we call such a reduction a polynomial kernel. 

Our study focuses on two natural parameters for \knapsack: The number of different item weights $w_{\#}$, and the number of different item profits $p_{\#}$. Our main technical contribution is a proof showing that \knapsack\ does not admit a polynomial kernel for any of these two parameters under standard complexity-theoretic assumptions. Our proof discovers an elaborate application of the standard kernelization lower bound framework, and develops  along the way novel ideas that should be useful for other problems as well. We complement our lower bounds by showing the \knapsack\ admits a polynomial kernel for the combined parameter $w_{\#}+p_{\#}$.
\end{abstract}

\smallskip

\noindent\textbf{Keywords:} Knapsack, polynomial kernels, compositions, number of different weights, number of different profits.


\section{Introduction}

This paper proves a new complexity-theoretic barrier for the classical \knapsack\ problem. Namely, we prove that \knapsack\ has no polynomial kernels when parameterized by either the number~$w_\#$ of different weights, or the number $p_\#$ of different profits. Our results hold under the standard complexity-theoretic assumption \NPneqcoNPpoly. We also show that when both $w_\#$ and $p_\#$ are taken as a combined parameter, \knapsack\ does admit a polynomial kernel. Below we give a brief review of recent algorithmic progress for the \knapsack\ problem, as well as a quick survey through kernelization and parameterized complexity. We then describe how our results fit into the current state of the art, and give an overview of the main techniques used for obtaining our new hardness result for \knapsack.

\paragraph{The Knapsack problem.}

\knapsack\ (also known as \textsc{0/1 Knapsack}) is one the most fundamental and well studied problems in combinatorial optimization and theoretical computer science. In its most basic form, it is defined as follows:

\problemdef{\knapsack}
{A set $\mathcal{X} = \{x_1, \ldots, x_n\}$ of $n$ items, a weight function $w: \mathcal{X} \to \mathbb{N}$, a profit function $p: \mathcal{X} \to \mathbb{N}$, and two integers $W$ and $P$.}
{Is there a subset~$\mathcal{S} \subseteq \mathcal{X}$ with total weight $w(\mathcal{S})=\sum_{x \in \mathcal{S}} w(x) \leq W$ and total profit $p(\mathcal{S})=\sum_{x \in \mathcal{S}} p(x) \geq P$?}

\knapsack\ enjoys a key status in algorithmic design due to numerous reasons. First, it has many natural applications, in various practical areas such as resource allocation and scheduling. Second, it has immense educational value: Karp's \NP-hardness proof (from his seminal paper~\cite{Karp72}) is the first example of a reduction to a problem involving numbers, while the $O(Wn)$-time (or $O(Pn)$-time) algorithm by Bellman~\cite{bellman1957dynamic} from 1957 is one of the first dynamic programming algorithms, and it is still taught in most undergraduate algorithms courses to this day. And third, \knapsack\ has deep connections to other areas of computation: For example, one of the earliest cryptosystems by Merkle and Hellman~\cite{MH78} was based on \knapsack, and this was later extended to a host of other \emph{Knapsack-type} cryptosystems~\cite{BO88,CR88,IN96,Odl90}.

\knapsack\ is also important since it is both a generalization and a special case of a few other classical problems. For instance, it is a special case of the fundamental scheduling problem of minimizing the weighted number of tardy jobs on a single machine, the so-called $1||\sum w_jU_j$ problem~\cite{pinedo2016scheduling}. The variant of \knapsack\ where $W=P$ and $w(x)=p(x)$ for each item $x \in \mathcal{X}$ is precisely the \subsum\ problem: 

\problemdef{\subsum}
{A set $\mathcal A = \{a_1, \dots, a_{n}\}$ of $n$ non-negative integers and a target integer~$B$.}
{Is there a subset~$\mathcal A^* \subseteq \mathcal A$ with $\sum_{a\in \mathcal A^*} a =B$?}

Entire books \cite{KPP04book,MT90book} are dedicated to algorithmics for \knapsack. Quite surprisingly, major algorithmic advances are still being discovered in recent years. These typically involve improvements on Bellman's classical $O(\min \{W,P\} \cdot n)$-time algorithm in cases where either the maximum weight~$w_{\max} = \max_x w(x)$ or the maximum profit $p_{\max} = \max_x p(x)$ are relatively small.
Pisinger~\cite{Pisinger99} was the first to present such an improvement with his~$O({w_{\max} \cdot p_{\max} \cdot n})$-time algorithm. Later followed a series of papers with increasing improvements~\cite{AxiotisTzamos19,BateniHSS18,KellererP04}, culminating in an $\widetilde{O}(\min\{w^3_{\max}, p^3_{\max}\} + n)$-time\footnote{{We use $\widetilde{O}(\cdot)$ to suppress polylogarithmic factors.}} algorithm by Polak \etal~\cite{PolakEtAl21}, and an ${\widetilde{O}(\min\{w^{2/3}_{\max} \cdot p_{\max},\, w_{\max} \cdot p^{2/3}_{\max}\} \cdot n)}$-time algorithm by Bringmann and Cassis~\cite{BringmannCassis23}.
Very recently, Bringmann~\cite{Bringmann23} as well as Jin~\cite{Jin23} announced $\widetilde O(n + w_{\max}^2)$-time algorithms (and also ${\widetilde O(n+p_{\max}^2)}$-time algorithms) for \knapsack.
Cygan~\etal~\cite{CyganMWW19} showed that there are no~${O((W+n)^{2-\varepsilon})}$-time algorithms for \knapsack, for any~$\varepsilon > 0$, unless $(\min,+)$-convolutions can be solved in subquadratic time. An easy modification of their argument shows also that there are also no $O((P+n)^{2-\varepsilon})$-time algorithms under the same hypothesis. However, interestingly enough, these two lower bounds do not hold simultaneously together as shown by Bringmann and Cassis~\cite{BringmannCassis22}, who showed that \knapsack\ can be solved in $O((W+P)^{1.5}+n)$ time.

\paragraph{Parameters \boldmath{$w_{\#}$} and \boldmath{$p_{\#}$}.}

As discussed above, the time complexity of \knapsack\ with respect to $w_{\max}$ and $p_{\max}$ is well understood. Thus, it makes sense to consider other parameters. Let~$w_{\#}=|\{w(x_i):  i \in \{1, \ldots, n\}\}|$ denote the number of \emph{different} weights in a given \knapsack\ instance, and let $p_{\#}=|\{p(x_i): i \in \{1, \ldots, n\}|$ denote the number of different profits. These parameters can be expected to be small in several natural applications, and have been initially studied for \subsum\ (i.e.,. when $w(x_i)=p(x_i)$ for all items $x_i$) by Fellows \etal~\cite{FellowsGR12} in the context of the more general \emph{small number of numbers} paradigm. Clearly, $w_{\#} \leq w_{\max}$ and $p_{\#} \leq p_{\max}$, and so designing algorithms which are efficient in terms of $w_{\#}$ or~$p_{\#}$ is more challenging. In particular, since \knapsack\ is \NP-hard and $w_{\#},p_{\#} \leq n$, we cannot expect algorithms with running times of the form~$(w_{\#} \cdot n)^{O(1)}$ or $(p_{\#} \cdot n)^{O(1)}$.

Since polynomial-time algorithms with respect to $w_{\#}$ and $p_{\#}$ are unlikely, it is natural to consider these two parameters in the context of parameterized complexity~\cite{CyganFKLMPPS15}. In this context, it is not difficult to show \knapsack\ can be formulated as an Integer Linear Program (ILP) with either $O(w_{\#})$ or $O(p_{\#})$ many variables. Using one of several solvers for ILP with few variables, such as Lenstra's famous algorithm~\cite{Lenstra83,Kannan87},  gives us algorithms with running times of $2^{\widetilde{O}(w_{\#})} \cdot |I|$ and~$2^{\widetilde{O}(p_{\#})} \cdot |I|$ for \knapsack, where $|I|$ is the total encoding length of the input (see Hermelin \etal~\cite{HermelinKPS21} for algorithms with similar running times for the more general $1||\sum w_jU_j$ problem). 
Such algorithms are known as \emph{fixed-parameter algorithms} (FPT) in the terminology of parameterized complexity.

Note that both of these algorithms cannot be significantly improved assuming the Exponential Time Hypothesis (ETH). Indeed, it is known that assuming ETH, there are no $2^{o(n)}$-time algorithms for \subsum~\cite{BuhrmanLT15,JansenLL16}. As both $w_{\#}$ and $p_{\#}$ are bounded by~$n$, and \subsum\ is a special case of \knapsack, this implies that there are most likely no $2^{o(w_{\#})} \cdot n$ or $2^{o(p_{\#})} \cdot n$ time algorithms for \knapsack.
Using the Strong Exponential Time Hypothesis (SETH) instead of ETH, Abboud \etal~\cite{AbboudBHS22} showed a stronger lower bound excluding $B^{1-\epsilon} \cdot 2^{o(n)}$-time algorithms for any~${\epsilon> 0}$ (which excludes $(W + P)^{1-\epsilon} \cdot 2^{o(w_{\#} + p_{\#})}$-time algorithms for \knapsack).
Once fixed-parameter algorithms for a particular problem---in this case, \knapsack{}---have been devised, the next step is to understand the kernelization complexity of the problem at hand.

\paragraph{Kernelization.} 

One of the most fundamental and important techniques in parameterized complexity is kernelization~\cite{kernelization_book}:
\begin{definition}
A \emph{kernelization} algorithm (or \emph{kernel}) for a parameterized problem $\Pi$ is an algorithm that receives as input an instance~$I$ of~$\Pi$ with parameter~$k$ and outputs in polynomial time an instance~$J$ of~$\Pi$ with parameter~$\ell$ such that
\begin{itemize}
\item $(I,k)$ is a ``yes"-instance for $\Pi$ if and only if $(J,\ell)$ is ``yes"-instance for $\Pi$, and
\item $|J| + \ell \leq f(k)$ for some computable function $f(\cdot)$.
\end{itemize}
The function $f(\cdot)$ is referred to as the \emph{size} of the kernel. 
\end{definition}

Thus, a kernel is a self-reduction from a problem onto itself that produces an equivalent instance with size bounded by the input parameter. In this way, kernelization may be thought of as preprocessing that aims to simplify or ``kernelize" a problem instance by reducing its size while preserving some solution. Following this line of thought, problems that admit kernels of small sizes may be thought of as problems that allow efficient and effective preprocessing. For this reason, research into kernelization algorithms has seen a significant surge in recent years, and has become one of the central topics in parameterized complexity (see e.g. the monograph by Fomin \etal~\cite{kernelization_book} and the numerous references within). 

In the context of \knapsack, a kernelization algorithm for say parameter~$w_{\#}$ transforms any \knapsack\ instance in polynomial-time into an equivalent instance where the total encoding length of item weights and profits are bounded by~$f(w_{\#})$, for some function $f(\cdot)$. Observe that the $2^{\widetilde O(w_{\#})} \cdot n$-time algorithm for \knapsack\ implies a kernel of size $2^{\widetilde O(w_{\#})}$ for the problem: Indeed, let $|I|$ denote the total encoding length of item weights and profits in a given \knapsack\ instance and observe that $n \leq |I|$. The kernelization algorithm can first check whether $|I| = 2^{\widetilde O(w_{\#})}$. If this is the case, then the instance already has size bounded by $2^{\widetilde O(w_{\#})}$, and otherwise it is solvable in polynomial time by the $2^{\widetilde O(w_{\#})} \cdot n=2^{\widetilde O(w_{\#})} \cdot |I|$-time algorithm. A similar argument shows that \knapsack\ has a kernel of size $2^{\widetilde O(p_{\#})}$.

The obvious question to ask is whether we can obtain smaller kernels with respect to either~$w_{\#}$ or~$p_{\#}$. Here, the gold standard in parameterized complexity are \emph{polynomial kernels}, kernels with size $f(k) = k^{O(1)}$. By now we know of countless fixed-parameter tractable (\NP-hard) problems that also admit polynomial kernels~\cite{kernelization_book}; yet, at the same time, for many other problems polynomial kernels were shown to be unlikely to exist. Thus, the central question this paper addresses is: 
\begin{quote}
Does \knapsack\ admit a polynomial kernel with respect to either $w_{\#}$ or~$p_{\#}$?     
\end{quote}
Note that there are previously known results that give encouraging indications to to the question above. Etscheid \etal~\cite{EtscheidKMR17} show that \subsum\ admits a polynomial kernel with respect to~$a_{\#}$, the number of different numbers in the instance (i.e, $a_{\#}=|\{a_i: i \in \{1,\ldots, n\}\}|$). Can this result be generalized to hold also for the \knapsack\ problem?

\subsection{Our results}

Our main technical result of the paper is a negative answer to the question above. We use the by now standard framework for excluding polynomial kernels~\cite{BodlaenderDFH09} based on the assumption \NPneqcoNPpoly\ (which implies that the polynomial hierarchy collapses) to show the following:
\begin{theorem}
\label{thm:main}%
Assuming \NPneqcoNPpoly, there is no polynomial kernel for \knapsack\ parameterized by the number $w_{\#}$ of different weights, nor by the number $p_{\#}$ of different profits.
\end{theorem}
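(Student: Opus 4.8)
The plan is to invoke the OR\nobreakdash-cross\nobreakdash-composition framework of Bodlaender, Downey, Fortnow and Hermelin (refined through the notion of cross\nobreakdash-composition by Bodlaender, Jansen and Kratsch): to rule out a polynomial kernel for \knapsack{} parameterized by $w_{\#}$ it suffices to exhibit an \NP-hard problem $L$ together with a polynomial\nobreakdash-time algorithm that, given $t$ instances of $L$ of the same size $s$, produces a single \knapsack{} instance which is a ``yes''-instance if and only if at least one of the $t$ inputs is, and whose parameter $w_{\#}$ is bounded by $\operatorname{poly}(s+\log t)$. By letting the underlying polynomial equivalence relation refine instances by size we may assume the inputs are pairwise distinct strings of length $s$, so $t\le 2^{s}$ and hence $\log t = O(s)$; consequently the composed instance may afford $\operatorname{poly}(s)$ distinct \emph{weights}, while it may use arbitrarily many distinct \emph{profits} and arbitrarily large numbers (its total length only has to be polynomial in $t+s$).

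First I would fix a convenient source problem through a chain of structure\nobreakdash-preserving reductions: from \oneinthreesat{}, which is \NP-hard, to a rigidly structured \rxc, and from there to \rss. The purpose of passing through these restricted variants is to land in a Subset\nobreakdash-Sum\nobreakdash-type problem in which \emph{all instances of a given size share a common shape}: in the exact\nobreakdash-cover\nobreakdash-to\nobreakdash-Subset\nobreakdash-Sum encoding each number is a $0/1$ vector with exactly three ones over a universe of size $3q$ and the target is the all\nobreakdash-ones vector, \emph{independent of the instance}, so the $\binom{3q}{3}=\operatorname{poly}(s)$ possible $3$\nobreakdash-set patterns form one pool of candidate weights shared by all $t$ instances rather than each instance contributing its own private weights. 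Checking that each reduction is polynomial and preserves the invariants we need (bounded arity, exact\nobreakdash-occurrence conditions, and disjointness detection via a large base so that no carries occur) is the routine part.

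I would then build the cross\nobreakdash-composition itself by laying out the numbers of one \knapsack{} instance in several digit blocks over a large base $B$, chosen so that the at most $\operatorname{poly}(s)$ items a solution can select never produce a carry between blocks: a cover block of $3q$ positions carrying the all\nobreakdash-ones exact\nobreakdash-cover arithmetic, a selector block of $O(\log t)$ positions fed by ``bit items'' that binary\nobreakdash-encode which of the $t$ instances is in use, and a consistency block whose job is to certify that every chosen cover item is legal for the selected instance. The cover items range only over the $\operatorname{poly}(s)$ pattern values (carrying in addition the per\nobreakdash-pattern membership data, which is indexed by pattern and not by instance, hence still only $\operatorname{poly}(s)$ distinct weight values), and the bit items add $O(\log t)$ further weights, so $w_{\#}=\operatorname{poly}(s)$; the genuinely instance\nobreakdash-dependent information — $\Theta(ts)$ bits in all — is pushed into the profits, where many distinct values are allowed. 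Choosing the budget $W$ and profit target $P$ so that a feasible solution is forced to take exactly $q$ cover items forming an exact cover, exactly one bit item per selector position, and to satisfy the consistency block should yield the desired equivalence.

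The main obstacle is exactly this consistency block: forcing all chosen cover items to belong to the single selected instance using only $\operatorname{poly}(s)$ distinct weights. Tagging cover items by their instance would blow $w_{\#}$ up to $\Omega(t)$, and the naive alternative runs into the asymmetry of \knapsack{}, namely that the weight constraint gives only an upper bound while the profit constraint gives only a lower bound, whereas pinning a quantity to an exact value (``the selected instance's patterns are used the right number of times'') wants a two\nobreakdash-sided constraint and we cannot afford to spend per\nobreakdash-instance data on the weight side. Engineering the check so that the rigidity already imposed by the cover block (the exact count $q$ of chosen cover items) lets a one\nobreakdash-sided profit constraint do the work, while the instance database is encoded so as to contribute only polynomially many distinct weights, is the technical heart of the argument and where the new ideas enter. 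Finally, the statement for $p_{\#}$ needs no separate construction: \knapsack{} on $(w,p,W,P)$ is equivalent, by passing to complements of selected sets, to \knapsack{} on $\bigl(p,\,w,\,p(\mathcal X)-P,\,w(\mathcal X)-W\bigr)$, a polynomial\nobreakdash-time transformation that interchanges $w_{\#}$ and $p_{\#}$, so the lower bound for $w_{\#}$ transfers to $p_{\#}$.
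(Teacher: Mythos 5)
Your setup is the right one and matches the paper's: the cross-composition framework, a source problem obtained by routing \textsc{Exact Cover} through a rigid Subset-Sum encoding so that all size-$s$ instances draw their numbers from one shared pool of $\operatorname{poly}(s)$ values (the paper's \rss{} with $\widetilde{\mathcal A}_n$ and fixed target $B_n$), an $O(\log t)$ instance selector, instance-specific data pushed into the profits, and the complement trick $(w,p,W,P)\mapsto(p,w,p(\mathcal X)-P,w(\mathcal X)-W)$ to transfer the bound from $w_\#$ to $p_\#$ (the paper cites Polak et al.\ for this, but your argument is sound). However, the step you defer --- the ``consistency block'' forcing the chosen cover items to belong to the selected instance --- is precisely the content of the theorem, and you correctly diagnose why the obvious attempts fail but do not supply a mechanism. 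As written, the proposal is a plan with the decisive gadget missing.

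The paper's resolution also shows that your architecture (take exactly $q$ cover items from the selected instance and \emph{check} consistency) is the wrong shape: there is no consistency block at all. Instead, the encoding item for $a\in\mathcal A_i$ gets profit $w(x)+i\cdot 3B$, and a solution for instance $i$ takes a valid $n$-subset of $\mathcal X_i$ \emph{together with all $3(t-1-i)n$ items of every higher-indexed instance}; the weight budget caps the number of items, and the profit target then forces the selection to be exactly such a ``suffix plus partial level $i$'' set, because swapping any item for one from a lower-indexed instance loses more than the slack $X/t$ (\Cref{lem:encoding}). This is what lets a one-sided profit constraint do the two-sided work you worried about. But it creates a second problem your proposal does not anticipate: the profit of such a solution contains the term $\bigl(\binom{t}{2}-\binom{i+1}{2}+\frac{i}{3}\bigr)\cdot 9nB$, which is \emph{quadratic} in $i$, so equalizing the target across all $i$ requires compensating a quadratic in $i$ using only $\operatorname{poly}(\log t)$ extra weights. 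The paper does this with $O(\log^2 t)$ ``quadratization'' items realizing the products $i(k)i(\ell)2^{k+\ell}$, tied to the $O(\log t)$ index items via powers of $3$ and the uniqueness statement of \Cref{lem:algebraic} so that only consistent choices of bits and bit-pairs can reach the target. Neither the suffix-taking profit ordering nor the quadratization gadget appears in your proposal, so the heart of the proof remains to be done.
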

\noindent The proof of the theorem above is obtained through a rather involved construction of what is known as a composition algorithm (see \Cref{def:composition}). This algorithm composes several instances of a specialized variant of \subsum\ into a single \knapsack\ instance where either $w_\#$ or~$p_\#$ is kept relatively small. We give an overview of this algorithm in \Cref{sec:overview}. 

Complementing the negative result of \Cref{thm:main}, we show that when both $w_{\#}$ and $p_{\#}$ are taken as a combined parameter, the polynomial kernel for \subsum~\cite{EtscheidKMR17} can be extended to a polynomial kernel for \knapsack. 
\begin{theorem}
\label{thm:secondary}%
\knapsack\ parameterized by $w_{\#} + p_{\#}$ admits a polynomial kernel.
\end{theorem}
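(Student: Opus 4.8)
The plan is to reduce a \knapsack{} instance to a feasibility question for an integer program of bounded dimension, and then to compress the numbers of that program exactly as Etscheid~\etal~\cite{EtscheidKMR17} compress \subsum{} instances parameterised by the number of distinct numbers. Since the instance has only $w_\#$ distinct weights and $p_\#$ distinct profits, its items fall into at most $d := w_\# \cdot p_\#$ \emph{types}, where a type is a weight--profit pair $(w_\tau, p_\tau)$ occurring with some multiplicity $n_\tau$; note $d \le (w_\# + p_\#)^2$, so $\mathrm{poly}(d) = \mathrm{poly}(w_\# + p_\#)$. A solution is determined by how many copies $s_\tau \in \{0,1,\dots,n_\tau\}$ it picks from each type, so the instance is a ``yes''-instance precisely when the system
\begin{equation*}
0 \le s_\tau \le n_\tau \ \ (\text{all types }\tau), \qquad \sum_\tau s_\tau w_\tau \le W, \qquad \sum_\tau s_\tau p_\tau \ge P
\end{equation*}
has an integer solution. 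First I would clean up: delete every type with $p_\tau = 0$; for each remaining type with $w_\tau = 0$, commit all of its copies and decrease $P$ by $n_\tau p_\tau$ (if $P$ drops to a non-positive value the instance is a trivial ``yes''); and then cap $n_\tau := \min\{n_\tau, \lfloor W / w_\tau\rfloor\}$, since further copies overflow the weight budget. After this every type has $w_\tau, p_\tau \ge 1$.

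It now suffices to bring all numbers $w_\tau, p_\tau, n_\tau, W, P$ down to magnitude $2^{\mathrm{poly}(d)}$: the resulting instance---either kept as a list of at most $d$ types with their multiplicities and the two bounds, or, after expanding each type into $O(\log n_\tau)$ items by the standard binary trick, written as an explicit \knapsack{} instance on $\mathrm{poly}(d)$ items---has encoding length $\mathrm{poly}(d)$. For the compression I would follow~\cite{EtscheidKMR17}, whose \subsum{} kernel is at heart a Frank--Tardos compression of a bounded-dimension integer program: the Frank--Tardos theorem turns any $v \in \mathbb{Z}^D$ and $N \in \mathbb{N}$ into a $\bar v \in \mathbb{Z}^D$ with $\lVert\bar v\rVert_\infty \le 2^{O(D^3)} N^{O(D^2)}$, computable in polynomial time, such that $\sign(v^\top z) = \sign(\bar v^\top z)$ for all $z \in \mathbb{Z}^D$ with $\lVert z\rVert_1 \le N - 1$. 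Concretely I would (i) argue that the system above is satisfiable, if at all, by an integer solution $s$ with $\lVert s\rVert_1 \le N$ for some $N = 2^{\mathrm{poly}(d)}$; (ii) apply Frank--Tardos to the concatenated vector $(w_1,\dots,w_d, W, p_1,\dots,p_d, P) \in \mathbb{Z}^{2d+2}$ with parameter $N + 2$, obtaining integers $\bar w_\tau, \bar W, \bar p_\tau, \bar P$ of magnitude $2^{\mathrm{poly}(d)}$ such that $w^\top s \le W \Leftrightarrow \bar w^\top s \le \bar W$ and $p^\top s \ge P \Leftrightarrow \bar p^\top s \ge \bar P$ for every $s$ in the common box; and (iii) re-cap $n_\tau := \min\{n_\tau, \lfloor\bar W/\bar w_\tau\rfloor\} \le 2^{\mathrm{poly}(d)}$ and output the resulting instance.

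Two points need verification, and only one is serious. The harmless one is the second linear constraint, which is the sole structural difference from \subsum{}: since Frank--Tardos preserves the sign of \emph{every} small-coefficient linear form and all relevant solution vectors lie in one common box, a single application to the concatenated vector preserves both inequalities at once, so nothing is lost in having two constraints rather than one. The real obstacle is step (i): capping $n_\tau$ at $\lfloor W/w_\tau\rfloor$ only bounds $\lVert s\rVert_1$ by $\mathrm{poly}(d)\cdot W$, which is not a parameter bound, whereas step (ii) presupposes such a bound---so there is a genuine circularity to break. I expect to resolve it as in~\cite{EtscheidKMR17}, by passing to an inclusion-minimal feasible solution, or symmetrically to the complement of one relative to the full multiset of items, and using a proximity/vertex argument: a vertex of the polytope cut out by the box together with the two hyperplanes has all but a bounded number of coordinates pinned to $0$ or to $n_\tau$, so such a solution splits into a ``saturated'' part plus a bounded correction, and feeding this structure back into the Frank--Tardos step should let one re-cap the multiplicities---and hence $\lVert s\rVert_1$---by a quantity depending on $d$ alone. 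Making this interaction actually close, now with the two constraints of \knapsack{} rather than the single equation of \subsum{}, is where I expect the bulk of the work, and the main risk, to lie.
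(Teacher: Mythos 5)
Your setup---grouping the items into $d = w_\# \cdot p_\#$ types, writing feasibility as a two-constraint integer program with box constraints, and compressing the coefficients via Frank--Tardos---is exactly the paper's, and you have correctly isolated the one non-trivial step: before Frank--Tardos can be applied you must bound $\lVert s\rVert_1$ for the relevant solution vectors by some $N = 2^{\mathrm{poly}(d)}$. The gap is that the route you propose for closing this step is not the right one and, as described, would fail. A vertex of the polytope does have all but two coordinates pinned to $0$ or $n_\tau$, but the saturated coordinates contribute $\sum_\tau n_\tau$ to $\lVert s\rVert_1$, and $n_\tau$ can genuinely need to be as large as $\lfloor W/w_\tau\rfloor$ (take a single type with $w_\tau = p_\tau = 1$ and $P = W$ exponential in $|I|$: every feasible solution has $\ell_1$-norm $W$). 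Substituting $s_\tau \mapsto n_\tau - s'_\tau$ on the saturated coordinates requires knowing which coordinates are saturated, and enumerating the $2^{O(d)}$ possibilities is not polynomial time. No proximity argument can bound $\lVert s\rVert_1$ by a function of $d$ alone, because no such bound is true.

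The missing idea is a win--win case distinction using the fixed-parameter algorithm for integer programming in bounded dimension, which your proposal never invokes. Since $\lVert s\rVert_1 \le n$ trivially (the total number of items), the only question is whether $n \le 2^{\mathrm{poly}(d)}$. If $d \lg d \le \lg n$, then the $2^{O(d\lg\lg d)} \cdot s^{O(1)}$-time algorithm for $d$-variable ILPs \cite{Lenstra83,ReisR23} solves the instance outright in polynomial time, and the kernelization outputs a trivial yes- or no-instance. Otherwise $n < 2^{d \lg d}$, and Frank--Tardos with $N = n+1$ yields coefficients of encoding length $O(d^3 + d^2\lg n) = O(d^3\lg d)$, after which your step (iii) and the conversion back to a \knapsack{} instance (by your binary-expansion trick, or generically via \NP-completeness of \knapsack{}) go through. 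This is also how Etscheid \etal~\cite{EtscheidKMR17} handle \subsum{}, so the analogy you were reaching for does exist---it just runs through Lenstra, not through minimal solutions. With this replacement for your step (i), the rest of your argument is sound; in particular, your observation that one Frank--Tardos application to the concatenated vector handles both inequalities simultaneously is correct and matches the paper, which applies the theorem separately to the weight and profit constraints with the same effect.
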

\noindent Thus, the lower bounds for $w_{\#}$ and  $p_{\#}$ cannot be combined. This is somewhat reminiscent to the situation mentioned above, where \knapsack\ is unlikely to have algorithms with subquadratic running times in either~$(W+n)$ or~$(P+n)$~\cite{CyganMWW19}, but admits an algorithm with with subquadratic running time in~$(W+P+n)$~\cite{BringmannCassis22}.   

\subsection{Technical overview}
\label{sec:overview}%

We prove \Cref{thm:main} for the parameter~$w_{\#}$; the statement for parameter~$p_{\#}$ then follows by a reduction from Polak \etal~\cite[Chapter~4]{PolakEtAl21}. In broad terms, our proof follows the standard framework for showing kernelization lower bounds that has been developed over the years~\mbox{\cite{BodlaenderDFH09,BodlaenderJK14,DellM14,Drucker15}}. In this framework, to exclude a polynomial kernel for a given problem $\Pi_1$ parameterized by some parameter $k$, one shows what is called a \emph{composition algorithm} (see~\Cref{def:composition} for a formal definition) from some known \NP-hard problem $\Pi_2$ to $\Pi_1$. This algorithm takes as input $t$ instances of~$\Pi_2$ of size $n$ each, and converts these instances in polynomial time to an equivalent instance of~$\Pi_1$ such that $k = (n \lg t)^{O(1)}$. Here equivalence means that at least one of the $t$ input instances is a ``yes"-instance for $\Pi_2$ if and only if the output instance of the composition is a ``yes"-instance for~$\Pi_1$. Such a composition algorithm then directly implies that $\Pi_2$ has no polynomial kernel with respect to~$k$ under the assumption \NPneqcoNPpoly~\cite{BodlaenderDFH09,BodlaenderJK14}.

We show a composition algorithm from a restricted version of \subsum\ which we call \rss. In this restricted version, any instance of size $n$ is restricted to include only numbers from a known set $\widetilde{\mathcal{A}}_n$ of size $O(n^3)$. This allows us to bound the number of different numbers in any $t$ instances of \rss\ of the same size. Our composition algorithm then proceeds as follows: Given any $t$ instances $\mathcal A_0,\ldots,\mathcal A_{t-1}$ of \rss\, the algorithm converts any integer $a \in \mathcal A_i$, $i \in \{0,\ldots,t-1\}$, to a \knapsack\ item~$x_a$ (which we refer to as an \emph{encoding item}) with weight~$w(x_a)=a$. We then assign a higher profit to items corresponding to \rss\ instances of higher index, meaning that it is always less profitable to choose an item corresponding to some integer $a \in \mathcal A_i$ than choosing an item corresponding to some integer $a \in\mathcal  A_{i_0}$ for $i < i_0$. This allows us to encode any solution $\mathcal A^*_i \subseteq \mathcal A_i$ to the $i$'th \rss\ instance by a solution $\mathcal{X}(\mathcal A^*_i)$ to our \knapsack\ instance which includes all items $x_a$ for $a \in \mathcal A^*_i$, and all items $x_a$ for $a \in \mathcal A_{i_0}$ for $i < i_0$. Thus, if we know index~$i$ in advance, our composition would be complete.

However, we do not have a priori knowledge of index~$i$. To circumvent this, we use \emph{index items} that encode the selection of~$i \in \{0,\ldots,t-1\}$. These are $2 \lg t$ items that encoding the selection of $t$ binary values $i(0),\ldots,i(\lg t -1)$ that correspond to the base 2 representation of~$i$, \ie $i = \sum_k i(k) \cdot2^k$. This is somewhat akin to the \emph{``colors and IDs''} technique for composition algorithms which was introduced by Dom \etal~\cite{DomLS14}. The goal of these index items is to ensure that choosing a solution $\mathcal{X}(\mathcal A^*_i)$, for any value of~$i$, will allow adding a subset of index items to the knapsack such that any such selection will have the same weight and profit. The obstacle is that the difference in profit between solutions $\mathcal{X}(\mathcal A^*_0)$ which includes almost all items, and solutions~$\mathcal{X}(\mathcal A^*_{t-1})$ which includes only items corresponding to instance $\mathcal A_{t-1}$, is quadratic in~$i$ (or~$t$). Hence, we cannot compensate for this difference using only $O(\lg t)$ index items.

We therefore introduce an additional $O(\lg^2 t)$ items which we refer to as  \emph{quadratization items}. These encode the selection of binary pair values $i(k) i(\ell)$ in $i^2 = \sum_k \sum_\ell i(k) i(\ell) \cdot 2^{k+\ell}$, and allow us to encode the quadratic compensation mentioned above. Unfortunately, this introduces additional technical difficulties, such as ensuring compatibility between the selection of the index and quadratization items. Meanwhile, we still need to maintain the compatibility between the encoding and index items as well. These difficulties are overcome using various applications of a basic algebraic lemma that we prove in \Cref{sec:preliminaries}. In the end, we obtain an instance with $w_{\#}=O(n^3 \cdot \lg^2 t)$ many different weights, which by previously known results (\Cref{thm:NoPolyKernel}) implies that \NPneqcoNPpoly. The full details of the entire composition are given in \Cref{sec:wsharp}.


\section{Preliminaries}
\label{sec:preliminaries}%

We next quickly review the kernelization lower bounds framework, introduced by Bodlaender \etal~\cite{BodlaenderDFH09} and further developed by~\cite{BodlaenderJK14,DellM14,Drucker15}, that will be used for proving \Cref{thm:main}. At the heart of the framework lies the notion of a composition.
\begin{definition}
\label{def:composition}
A composition algorithm from a problem $\Pi_1$ to a parameterized problem $\Pi_2$ is an algorithm that receives as input $t$ instances $I_0, \ldots, I_{t-1}$ of size~$n$ of~$\Pi_1$, and computes in polynomial time an instance~$(J,k)$ of $\Pi_2$ such that
\begin{itemize}
\item $J$ is a ``yes"-instance of $\Pi_2$ if and only if $I_i$ is a ``yes"-instance of $\Pi_1$ for some $i \in \{0, \ldots, t-1\}$,
\item and $k \leq (n + \lg t)^{O(1)}$.
\end{itemize}     
\end{definition}

The main connection between composition algorithms and the exclusion of polynomial kernels is given in the theorem below, whose proof relies on a complexity-theoretic lemma by Fortnow and Santhanam~\cite{FortnowSanthanam11}.
\begin{theorem}[\cite{BodlaenderDFH09,BodlaenderJK14}]
\label{thm:NoPolyKernel}%
Let $\Pi_1$ be an \NP-hard problem, and $\Pi_2$ be a parameterized problem with a composition algorithm from~$\Pi_1$ to~$\Pi_2$. Then $\Pi_2$ does not admit a polynomial kernel assuming \NPneqcoNPpoly.
\end{theorem}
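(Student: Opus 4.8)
The plan is to run the by-now-standard two-stage argument of Bodlaender \etal~\cite{BodlaenderDFH09}. I would argue by contradiction: assume that $\Pi_2$ admits a polynomial kernel, and then chain that kernel together with the given composition algorithm and an \NP-hardness reduction for $\Pi_1$ to obtain a polynomial-time procedure that takes $t$ instances of $\Pi_1$ and outputs a \emph{single} instance of $\Pi_1$ whose size is polynomial in the maximum input length and, crucially, \emph{independent of} $t$, and whose answer is the logical OR of the answers of the $t$ inputs. Such a procedure is an \emph{OR-distillation} of $\Pi_1$, and the complexity-theoretic lemma of Fortnow and Santhanam~\cite{FortnowSanthanam11} forbids \NP-hard problems from having one unless \NPeqcoNPpoly; that contradiction finishes the proof.

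Concretely, I would proceed as follows. Given $t$ instances $I_0,\dots,I_{t-1}$ of $\Pi_1$, each of size at most $n$, first note that there are fewer than $2^{n+1}$ distinct strings of length at most $n$, so after discarding duplicates --- which does not change the OR of the membership bits --- one may assume $t \le 2^{n+1}$, hence $\lg t \le n+1$. Now run the composition algorithm to get, in polynomial time, an instance $(J,k)$ of $\Pi_2$ with $k \le (n+\lg t)^{O(1)} = n^{O(1)}$ that is a ``yes''-instance of $\Pi_2$ iff some $I_i$ is a ``yes''-instance of $\Pi_1$. Next apply the assumed polynomial kernel to $(J,k)$, obtaining an equivalent instance $(J',k')$ of $\Pi_2$ with $|J'|+k' \le k^{O(1)} = n^{O(1)}$, so the bit-length of $J'$ is bounded by a fixed polynomial in $n$ that does not depend on $t$. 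Finally, since the unparameterized version of $\Pi_2$ lies in \NP (as is the case for \knapsack) and $\Pi_1$ is \NP-hard, fix once and for all a polynomial-time many-one reduction $\rho$ from $\Pi_2$ to $\Pi_1$ and output $I^\star = \rho(J')$; then $|I^\star| = n^{O(1)}$ and $I^\star$ is a ``yes''-instance of $\Pi_1$ iff some $I_i$ is.

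The composite map $I_0,\dots,I_{t-1}\mapsto I^\star$ runs in polynomial time and is exactly an OR-distillation of $\Pi_1$; precomposing each input with a polynomial-time reduction from \textsc{3-SAT} (which exists since $\Pi_1$ is \NP-hard) turns it into an OR-distillation of \textsc{3-SAT}, to which the Fortnow--Santhanam lemma~\cite{FortnowSanthanam11} applies and yields \NPeqcoNPpoly, the sought contradiction. The main obstacle here is essentially external: the hard content is packaged inside the Fortnow--Santhanam lemma, which I would invoke as a black box. Within the argument itself, the only step that needs genuine care is the deduplication, whose sole purpose is to keep $\lg t$ --- and hence the parameter $k$ and the final output size --- polynomially bounded in $n$ \emph{uniformly over} $t$ (the running time is allowed to depend on $t$, but the output size is not); beyond that one merely has to verify that the ``yes/no'' equivalence threads transitively through all four maps: the composition, the kernelization, the reduction $\rho$, and the \textsc{3-SAT}-reduction.
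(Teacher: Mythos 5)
Your proposal reconstructs the standard argument that this paper does not spell out but imports from \cite{BodlaenderDFH09,BodlaenderJK14}: deduplicate, compose, kernelize, map back to an \NP-hard language, and invoke the Fortnow--Santhanam lemma \cite{FortnowSanthanam11}. The key quantitative point is correctly identified --- deduplication forces $\lg t \le n+1$, so the final output has size polynomial in $n$ and independent of $t$ --- and the overall chain is the intended one.

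Two technical wrinkles should be addressed. First, the composition of \Cref{def:composition} takes $t$ instances \emph{all of the same size} $n$, whereas a distillation must accept arbitrary tuples of instances of length \emph{at most} $n$; moreover, after precomposing with the \textsc{3-SAT}-to-$\Pi_1$ reduction, the produced $\Pi_1$-instances will in general have different sizes even when the \textsc{3-SAT} inputs have equal length. So the step ``run the composition algorithm'' is not licensed as written. The standard repair is to partition the inputs into polynomially many size classes, run the composition and the kernel separately on each class, and then OR the polynomially many small outputs into a single instance (e.g.\ at the SAT level); deduplication alone does not take care of this. Second, your reduction $\rho$ from $\Pi_2$ back to $\Pi_1$ silently assumes that the unparameterized version of $\Pi_2$ is in \NP{}. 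This holds for \knapsack{}, but it is not a hypothesis of the theorem --- and it is also unnecessary: the Fortnow--Santhanam lemma, in the form used in \cite{BodlaenderJK14}, rules out OR-compressions of an \NP-hard language into an \emph{arbitrary} target language, so the kernel output itself can serve as the distillation output. Indeed, your own last step already leans on this stronger form, since the outputs of your \textsc{3-SAT} distillation are $\Pi_1$-instances and $\Pi_1$ is only assumed \NP-hard, not to lie in \NP{}. With the size-class grouping added and the lemma invoked in its ``arbitrary target'' form (which also lets you drop $\rho$), the argument is complete and matches the cited proofs.
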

\noindent Note that \NPneqcoNPpoly\ is a widely believed assumption in complexity theory, and if it were false then the polynomial hierarchy would collapse to its third level~\cite{Yap83}.

As a final point, we will also make use of the following basic algebraic lemma throughout our proof. Below we provide a proof for the sake of completeness. 
\begin{lemma}
\label{lem:algebraic}
Let $b > 1$ and $k$ be fixed positive integers. Then there exists a unique integer solution to the equation
\begin{equation*}
x_0b^0 + x_1b^1 + \cdots + x_{k-1}b^{k-1} = \sum^{k-1}_{i=0} b^i
\end{equation*}
constrained by $x_i \in \{0,1, \ldots,b\}$ for each $i \in \{0,\ldots,k-1\}$. Namely, the solution is $$x_0 = x_1 = \cdots = x_{k-1} = 1\enspace.$$
\end{lemma}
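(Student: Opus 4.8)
The plan is to prove existence and uniqueness separately, with uniqueness being the only substantive part. Existence is immediate: substituting $x_0 = x_1 = \cdots = x_{k-1} = 1$ gives $\sum_{i=0}^{k-1} b^i$, and each $x_i = 1 \in \{0, 1, \ldots, b\}$, so this is a valid solution.

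For uniqueness, I would argue by induction on $k$. The base case $k = 1$ is trivial: $x_0 b^0 = x_0 = 1$ forces $x_0 = 1$. For the inductive step, suppose $(x_0, \ldots, x_{k-1})$ is any solution satisfying the constraints. Reduce the equation modulo $b$: since $x_1 b^1 + \cdots + x_{k-1} b^{k-1} \equiv 0 \pmod b$ and $\sum_{i=0}^{k-1} b^i \equiv 1 \pmod b$ (as $b^i \equiv 0 \pmod b$ for $i \geq 1$), we get $x_0 \equiv 1 \pmod b$. Combined with $x_0 \in \{0, 1, \ldots, b\}$, the only possibilities are $x_0 = 1$ or $x_0 = b + 1$; but $b + 1$ is excluded by the constraint, so $x_0 = 1$. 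Subtracting $x_0 b^0 = 1 = b^0$ from both sides and dividing by $b$ yields $x_1 b^0 + x_2 b^1 + \cdots + x_{k-1} b^{k-2} = \sum_{i=0}^{k-2} b^i$, which is exactly an instance of the same equation with $k$ replaced by $k-1$ and the same constraint set $\{0, 1, \ldots, b\}$ on the variables $x_1, \ldots, x_{k-1}$. By the induction hypothesis, $x_1 = \cdots = x_{k-1} = 1$, completing the step.

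The one point requiring a little care — really the only place the hypothesis $b > 1$ and the exact range $\{0, \ldots, b\}$ matter — is ruling out the "carry" solution $x_0 = b+1$ in the modular step; this is handled simply by noting $b + 1$ lies just outside the allowed range. (Note that the range is deliberately $\{0, \ldots, b\}$ rather than $\{0, \ldots, b-1\}$: the endpoint value $b$ must remain available for the later applications, and uniqueness still holds because $b+1$, not $b$, would be the next digit that reproduces the same residue.) Everything else is a routine modular arithmetic computation, so I would keep the write-up short.
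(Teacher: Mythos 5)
Your proof is correct, and it takes the induction in the opposite direction from the paper. The paper peels off the \emph{most} significant digit: it bounds $\sum_{i=0}^{k-1} b^i < 2b^{k-1}$ to force $x_{k-1} \le 1$, then uses $\sum_{i=0}^{k-2} x_i b^i \le \sum_{i=1}^{k-1} b^i < \sum_{i=0}^{k-1} b^i$ (this is where the cap $x_i \le b$ enters) to force $x_{k-1} \ge 1$, and recurses on the remaining low-order part. You instead peel off the \emph{least} significant digit by reducing modulo $b$, which pins $x_0 \equiv 1 \pmod b$ and hence $x_0 = 1$ since $b+1$ exceeds the allowed range; subtracting and dividing by $b$ gives the $(k-1)$-instance. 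The two arguments are of essentially equal length; yours localizes the role of the hypothesis $x_i \le b$ to a single, very visible step (excluding the carry value $b+1$), whereas in the paper that hypothesis is buried inside a summation bound used to show the leading digit is nonzero. Your closing remark about why the range is $\{0,\dots,b\}$ rather than $\{0,\dots,b-1\}$ correctly identifies that the lemma is tight in this sense and that the slack is needed for the later applications.
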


\begin{proof}
    We show the statement by induction on $k$.
    For $k = 1$, the statement is obvious.
    So fix~${k > 1}$.
    Note that $\sum_{i=0}^{k-1} b^i = b^{k-1} + \sum_{i=0}^{k-2} b^i = b^{k-1} + \frac{b^{k-1}}{b-1} < 2b^{k-1}$, so we have that $x_{k - 1} \in \{0, 1\}$.
    Further, we have $\sum_{i=0}^{k-2} x_i b^i \le \sum_{i=0}^{k-2} b \cdot b^i = \sum_{i=1}^{k-1} b^i < \sum_{i=0}^{k-1} b^i$, implying that $x_k > 0$.
    Thus, we have $x_{k-1} =1$ and $x_i = 1$ for $i \in \{0, \ldots, k-2\}$ follows by induction.
\end{proof}


\section{No Polynomial Kernel for Parameter \boldmath{$w_{\#}$}}
\label{sec:wsharp}

In the following, we present the proof of \Cref{thm:main} for parameter~$w_{\#}$, the total number of different weights in a \knapsack\ instance. As mentioned above, in our proof we construct a composition from a restricted version of \subsum\ to \knapsack\ parameterized by~$w_{\#}$. The proof is divided into four parts: In the first, we introduce the \rss\ problem, and prove that it is \NP-hard. In the second part we begin to describe our composition by showing how to encode $t$ instances $\mathcal A_0,\ldots,\mathcal A_{t-1}$ of \rss\ into a single set $\mathcal{X}$ of \emph{encoding items}. In the third part we describe the set $\mathcal{Y}$ of \emph{quadratization items}, and the set $\mathcal{Z}$ of \emph{index items}, which together form the instance selection gadget of the composition. The final part is devoted to finishing details, and to proving the correctness of the composition.  

\subsection{Restricted Subset Sum}

We start with a restricted version of \subsum\ which allows us to bound the number of different numbers appearing in any set of $t$ \subsum\ instances. Namely, in our restricted version of \subsum, any instance of size $n$ contains numbers from a restricted set of $O(n^3)$ many numbers. Apart from this, its useful properties are that the target value only depends on the number of input numbers, and that any solution must have the same cardinality. For this, we define the set of possible numbers which may be contained in a \rss\ instance of size $n$ as
$$
\widetilde{\mathcal{A}}_n := \big\{(3n+1)^{j_1} + (3n+1)^{j_2} + (3n+1)^{j_3} \,\, : \,\, j_1, j_2, j_3 \in \{1,\ldots,3n\}\big\}\enspace .
$$ 
Note that $\widetilde{\mathcal{A}}_n$ contains $27n^3=O(n^3)$ many integers. Furthermore, we also 
define a global target for all instances of size $n$ by~$B_n := \sum_{j=1}^{3n} (3n+1)^j$.

\problemdef{\rss}
{A set $\mathcal A = \{a_1, \dots, a_{3n}\}$ of $3n$ integers from $\widetilde{\mathcal{A}}_n$ with $\sum_{j=1}^{3n} a_j = 3B_n$.}
{Is there a subset~$\mathcal A^* \subseteq \mathcal A$ with $|\mathcal A^*|=n$ such that $\sum_{a \in \mathcal A^*} a = B_n$?}

\begin{lemma}
\label{lem:NP-hard}%
\rss\ is \NP-complete.
\end{lemma}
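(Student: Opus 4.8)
The plan is the following. Membership in \NP{} is immediate: a subset $\mathcal A^\ast \subseteq \mathcal A$ with $|\mathcal A^\ast| = n$ and $\sum_{a \in \mathcal A^\ast} a = B_n$ is a polynomial-size certificate that is checkable in polynomial time. For \NP{}-hardness I would reduce from \rxc{}, the restriction of \textsc{Exact Cover by 3-Sets} in which the ground set has the form $U = \{u_1, \ldots, u_{3n}\}$, the collection $\mathcal C$ consists of exactly $3n$ three-element subsets of $U$, and every element of $U$ lies in exactly three members of $\mathcal C$; the task is to decide whether some subcollection of $\mathcal C$ partitions $U$ (any such subcollection then consists of exactly $n$ sets). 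That \rxc{} is \NP{}-hard follows from the folklore reduction from \oneinthreesat{}: one takes the clause set as $U$ and, for each variable, the set of the (exactly three) clauses containing it as a member of $\mathcal C$; then $1$-in-$3$ satisfying assignments correspond bijectively to subcollections of $\mathcal C$ covering every clause exactly once, while monotonicity and the $3$-occurrence restriction make $(U, \mathcal C)$ precisely of the regular, ``square'' shape above.

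Given a \rxc{} instance $(U, \mathcal C)$ as above, I would encode each set $C = \{u_{j_1}, u_{j_2}, u_{j_3}\} \in \mathcal C$ as the integer
\[
  a_C \;:=\; (3n+1)^{j_1} + (3n+1)^{j_2} + (3n+1)^{j_3} \;\in\; \widetilde{\mathcal A}_n ,
\]
and let $\mathcal A$ be the resulting family of $3n$ integers (treated as a multiset). Read in base $3n+1$, the number $a_C$ has digit $1$ in position $j$ exactly when $u_j \in C$, and $0$ elsewhere; hence the digit of $\sum_{C \in \mathcal C} a_C$ in position $j$ is the number of sets of $\mathcal C$ containing $u_j$, namely $3$, and since $3 < 3n+1$ no carries occur, so $\sum_{C \in \mathcal C} a_C = \sum_{j=1}^{3n} 3 \cdot (3n+1)^j = 3 B_n$. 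Thus $\mathcal A$ is a syntactically valid \rss{} instance, produced in polynomial time.

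For correctness, suppose first that $\mathcal C^\ast \subseteq \mathcal C$ partitions $U$. Then $|\mathcal C^\ast| = n$, and the base-$(3n+1)$ digit of $\sum_{C \in \mathcal C^\ast} a_C$ in position $j$ is $1$ for every $j \in \{1, \ldots, 3n\}$, so $\sum_{C \in \mathcal C^\ast} a_C = \sum_{j=1}^{3n} (3n+1)^j = B_n$ and $\{a_C : C \in \mathcal C^\ast\}$ is a solution for \rss{}. Conversely, suppose $\mathcal A^\ast \subseteq \mathcal A$ has $\sum_{a \in \mathcal A^\ast} a = B_n$, and for $j \in \{1, \ldots, 3n\}$ let $x_j$ be the number of sets $C$ with $a_C \in \mathcal A^\ast$ and $u_j \in C$. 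Since each such $C$ contributes at most $1$ to $x_j$ and $u_j$ lies in only three sets of $\mathcal C$, we have $x_j \in \{0,1,2,3\} \subseteq \{0, 1, \ldots, 3n+1\}$. Every $a_C$ is divisible by $3n+1$, so dividing $\sum_{a \in \mathcal A^\ast} a = B_n$ by $3n+1$ yields
\[
  \sum_{j=0}^{3n-1} x_{j+1} \, (3n+1)^{j} \;=\; \sum_{j=0}^{3n-1} (3n+1)^{j} ,
\]
and \Cref{lem:algebraic}, applied with $b = 3n+1$ and $k = 3n$, forces $x_j = 1$ for all $j \in \{1, \ldots, 3n\}$. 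That is, the sets $C$ with $a_C \in \mathcal A^\ast$ partition $U$; in particular $|\mathcal A^\ast| = n$ holds automatically, since $3 |\mathcal A^\ast| = \sum_{j=1}^{3n} x_j = 3n$. (So the cardinality requirement in the definition of \rss{} costs nothing in this reduction, even though later stages of the composition rely on every solution of an \rss{} instance having the same size.)

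The only point requiring care is to make the constructed instance obey the rigid syntax of \rss{} with no slack — exactly $3n$ numbers, total sum exactly $3B_n$, and every number a sum of three powers of $3n+1$ with exponents in $\{1, \ldots, 3n\}$. This is exactly what the regularity of \rxc{} provides, which is the reason the restricted pool $\widetilde{\mathcal A}_n$ and the common target $B_n$ are defined the way they are; given that, the base-$(3n+1)$ digit bookkeeping and the appeal to \Cref{lem:algebraic} are routine. The genuine work therefore lies in isolating the right intermediate problem \rxc{} and checking it is \NP{}-hard — which, as sketched, is a mild variant of classical occurrence-restricted satisfiability reductions.
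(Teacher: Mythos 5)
Your proof is correct and follows essentially the same route as the paper: a Karp-style reduction from the occurrence-restricted \rxc{} using the encoding $a_C = \sum_{j}(3n+1)^j$, with \Cref{lem:algebraic} handling the converse direction. Your added observations---that the cardinality constraint $|\mathcal A^*|=n$ is automatic, and the sketch of why \rxc{} itself is \NP{}-hard (which the paper simply cites)---are correct but do not change the argument.
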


\begin{proof}
We present a reduction from a variant of \rxc\ where each element appears in exactly three sets. Recall that in \rxc, the input consists of a set $\mathcal{T}$ consisting of 3-element subsets of~$\{1,\ldots,3n\}$ such that each~$j \in \{1,\ldots,3n\}$ appears in exactly three sets from~$\mathcal{T}$, and the question is whether there exists a subset~$\mathcal{T}' \subseteq \mathcal{T}$ such that each element from~$\{1,\ldots,3n\}$ appears in exactly one set from~$\mathcal{T}'$. This variant of \rxc\ is well-known to be \NP-hard~\cite{Gonzalez85}.

Our reduction is almost identical the original hardness reduction for \subsum\ by Karp~\cite{Karp72}, only that we reduce from \rxc\ instead of the more general \textsc{Exact Cover}: Let $\mathcal{T}$ be an instance of \rxc. For each 3-element set~$T \in \mathcal{T}$, a number~$a_T := \sum_{j \in T} (3n+1)^j$ is added to the \rss\ instance. In this way, the constructed instance of \rss\ is $\mathcal A :=\{a_T : T \in  \mathcal{T}\}$.
Note that $a_T \in \widetilde{\mathcal{A}}_n$ for every $T \in \mathcal{T}$.
Further, since each~$j \in \{1,\ldots,3n\}$ appears in exactly three sets from~$\mathcal{T}$, we have $\sum_{T \in \mathcal{T}} a_T = 3 \sum_{j=1}^{3n} (3n+1)^j = 3 B_n$.
Below we prove the correctness of this construction.
    
Suppose there is a solution~$\mathcal{T}' \subseteq \mathcal{T}$ for the \rxc\ instance. Then $|\mathcal{T}'| = n$ and we have $\sum_{T \in \mathcal{T}} a_T = \sum_{\ell=1}^{3n} (3n+1)^\ell=B_n$. Conversely, suppose there is a solution~$\mathcal A^* \subseteq \mathcal A$ with $|\mathcal A^*| =n$ for the \rss\ instance. 
Let \mbox{$\mathcal{T}^* := \{T : a_T \in \mathcal A^*\}$}.
Since each term $(3n+1)^{j}$ appears only $3< 3n+1$ times, \Cref{lem:algebraic} implies that the only way for some numbers from~$\mathcal A$ to add up to~$B_n = \sum_{j=1}^{3n} (3n+1)^j$ is that each term $(3n+1)^j$ appears in exactly one~$a_T \in \mathcal A^*$.
In other words, for each $j \in \{1,\ldots,3n\}$, there is exactly one $T \in \mathcal{T}^*$ with~$j \in T$.
\end{proof}

\subsection{Encoding Gadget}
\label{subsec:encoding}

In the following, we show how to encode $t$ instances of \rss\ into a single \knapsack\ instance. Throughout the remainder of the section, we use $\mathcal A_0, \ldots , \mathcal A_{t-1}$ to denote the $t$ input instances of \rss\ to our composition, where $|\mathcal A_i|=3n$ for each~$i\in \{0,\ldots,t-1\}$. By copying instances, we may assume without loss of generality that~${t = 2^k}$ for some~$k \in \mathbb{N}$. Furthermore, we let $\mathcal A_i = \{a^i_1,\ldots,a^i_{3n}\}$ denote the $i$th instance for each~$i\in{\{0,\ldots,t-1\}}$. By the definition of \rss, we have $a^i_j \in \widetilde{\mathcal{A}}_n$ for each each~$i\in\{0,\ldots,t-1\}$ and~$j\in\{1,\ldots,3n\}$, and~$\sum_j a^i_j = 3B_n$ for each~$i\in\{0,\ldots,t-1\}$.

Recall that \subsum\ can be seen as a special case of \knapsack\ where the profit of each item equals its weight. This yields an easy reduction from each single \rss\ instance~$\mathcal A_i$ to \knapsack. We apply this reduction with a slight modification: To capture the condition that each solution of \rss\ shall contain exactly $n$ numbers, we add a large number $X = 3tn B_n$ to each number in a \rss\ instance, and set~${B := B_n + n \cdot X}$. We also increase the profit of each item corresponding to the~$i$'th \rss\ instance by adding~$i \cdot 3B$ to its profit. More precisely, for each~$i \in \{0,\ldots,t-1\}$ and~$j \in \{1,\ldots,3n\}$, we construct an \emph{encoding item} $x^i_j$ with 
\begin{itemize}
\item $w(x^i_j) = X + a^i_j$, and
\item $p(x_i^j) = X+ a^i_j + i \cdot 3B$.
\end{itemize}  
We use $\mathcal{X} = \{x^i_j: i \in \{0,\ldots,t-1\}, j \in \{1,\ldots,3n\}\}$ to denote the set of all encoding items.

Let $\mathcal{X}_i$ denote the set of encoding items corresponding to instance $\mathcal A_i$ of \rss, \ie $\mathcal{X}_i = \{x_j^i : j\in \{1,\ldots,3n\}\}$. Note that $w(\mathcal{X}_i)  = 3B $ and $p(\mathcal{X}_i) = 3B  + 9nB \cdot i$. Now suppose $\mathcal A_i$ has a solution $\mathcal A^*_i \subset \mathcal A_i$. We would like to encode this solution using the following set of encoding items
$$
\mathcal{X}(\mathcal A^*_i) = \mathcal{X}^*_i \cup \mathcal{X}_{i+1} \cup \mathcal{X}_{i+2} \cup \cdots \cup \mathcal{X}_{t-1},
$$
where $\mathcal X_i^* := \{x_i^j : a_i^j \in \mathcal A_i^* \} $ is the set of items corresponding to~$\mathcal A_i^*$. An easy calculation shows that if the elements of $\mathcal A^*_i$ sum up to~$B$ (\ie $\mathcal A^*_i$ is indeed a solution), then $w(\mathcal{X}(\mathcal A^*_i)) = (3t-3i-2)\cdot B$ and $p(\mathcal{X}(\mathcal A^*_i)) = (3t-3i-2)\cdot B + (\binom{t}{2}-\binom{i+1}{2}+\frac{i}{3}) \cdot 9nB$. The next lemma shows that the converse is also true; namely, that if there is a subset of encoding items with the above weight and profit, then there must be a solution to the $i$th \rss\ instance.

\begin{lemma}
\label{lem:encoding}%
Let $i \in \{0,\ldots,t-1\}$. There exists a subset $\mathcal{X}^* \subseteq \mathcal{X}$ of encoding items with total weight 
$$
w(\mathcal{X}^*) \leq (3t-3i-2) \cdot B
$$
and total profit 
$$
p(\mathcal{X}^*)\geq (3t-3i-2) \cdot B \,\,+\,\,  \Bigr(\binom{t}{2} - \binom{i+1}{2} + \frac{i}{3}\Bigr) \cdot 9nB 
$$
for our \knapsack\ instance if and only if there exist a solution $\mathcal A^*_i \subseteq \mathcal A_i$ to the $i$'th \rss\ instance.
\end{lemma}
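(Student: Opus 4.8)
The plan is to prove the two directions of the equivalence separately. The direction ``$\Leftarrow$'' is exactly the computation preceding the lemma: given a solution $\mathcal A^*_i \subseteq \mathcal A_i$, one checks that $\mathcal X(\mathcal A^*_i) = \mathcal X^*_i \cup \mathcal X_{i+1} \cup \cdots \cup \mathcal X_{t-1}$ has weight \emph{exactly} $(3t-3i-2)B$ and profit \emph{exactly} $(3t-3i-2)B + \bigl(\binom{t}{2} - \binom{i+1}{2} + \tfrac{i}{3}\bigr)9nB$, so I would just record that this meets both bounds with equality. All the work is in ``$\Rightarrow$''. So suppose $\mathcal X^* \subseteq \mathcal X$ satisfies both bounds. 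For $\ell \in \{0,\dots,t-1\}$ write $s_\ell := |\mathcal X^* \cap \mathcal X_\ell|$ and $\sigma_\ell := \sum\{a^\ell_j : x^\ell_j \in \mathcal X^*\}$, so $0 \le s_\ell \le 3n$ and $0 \le \sigma_\ell \le 3B_n$. The first ingredient is the identity $p(\mathcal X^*) - w(\mathcal X^*) = 3B\sum_\ell \ell\, s_\ell$, immediate from $p(x^\ell_j) - w(x^\ell_j) = 3B\ell$. Combining it with the two bounds yields $\sum_\ell \ell\, s_\ell \ge 3n\bigl(\binom{t}{2} - \binom{i+1}{2}\bigr) + ni$; and the weight bound $w(\mathcal X^*) = X\sum_\ell s_\ell + \sum_\ell \sigma_\ell \le (3t-3i-2)(nX + B_n)$ forces $\sum_\ell s_\ell \le (3t-3i-2)n$, because $X = 3tnB_n$ is so large that one extra unit of $\sum_\ell s_\ell$ would already make $X\sum_\ell s_\ell$ overshoot the right-hand side.

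Next I would pin down the profile $(s_\ell)_\ell$ from these two constraints together with $s_\ell \le 3n$. By an exchange argument (moving a unit of ``mass'' from a lower to a higher index can only increase $\sum_\ell \ell\, s_\ell$), the maximum of $\sum_\ell \ell\, s_\ell$ subject to $\sum_\ell s_\ell \le (3t-3i-2)n$ and $0 \le s_\ell \le 3n$ equals precisely $3n\sum_{\ell > i}\ell + ni$, and it is attained \emph{only} when $s_\ell = 3n$ for all $\ell > i$, $s_\ell = 0$ for all $\ell < i$, and $s_i \le n$. Since the constraint derived above says $\sum_\ell \ell\, s_\ell$ attains exactly this maximum, all of these equalities and the bound $s_i \le n$ must hold. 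Note this does \emph{not} by itself force $s_i = n$: for $i = 0$ it only gives $s_0 \le n$, since index $0$ contributes nothing to $\sum_\ell \ell\, s_\ell$.

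The remaining, and most delicate, step is a ``double squeeze''. Using the profile just obtained and $\sigma_\ell = 3B_n$ for every $\ell > i$, re-expand $p(\mathcal X^*)$; the profit lower bound then reads $w(\mathcal X^*) \ge (3t-3i-2)B + 3Bi(n - s_i)$, and since $s_i \le n$ this is only consistent with the weight upper bound if $w(\mathcal X^*) = (3t-3i-2)B$ exactly and $i(n-s_i) = 0$. Feeding this exact weight back into the now fully explicit expression for $w(\mathcal X^*)$ collapses everything to the single equation $(n - s_i)X = \sigma_i - B_n$. Here $n - s_i \ge 0$, so the left side is a nonnegative multiple of $X = 3tnB_n$, while the right side lies in $[-B_n, 2B_n]$, an interval strictly inside $(-X, X)$; hence $n - s_i = 0$ and $\sigma_i = B_n$. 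Therefore $\{a^i_j : x^i_j \in \mathcal X^*\}$ is a subset of $\mathcal A_i$ of size $n$ summing to $B_n$, i.e., a solution to the $i$th \rss\ instance, as required.

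The part I expect to be the main obstacle is precisely this coordination: the exchange argument is too weak to pin $s_i$ (hence to extract a solution), so one must first spend the profit bound to make the weight bound tight, and only afterwards use the resulting weight \emph{equation} to recover both $s_i = n$ and $\sigma_i = B_n$ simultaneously. Making this close with no slack relies on the scale separation built into the construction --- the additive shift $X = 3tnB_n$ dominates the total $a$-mass $\sum_\ell \sigma_\ell \le 3tB_n$ and a fortiori $\sigma_i - B_n$ --- so the bookkeeping of the $X$-scale and $B_n$-scale terms has to be carried out carefully.
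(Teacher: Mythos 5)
Your proof is correct and follows essentially the same route as the paper's: the weight bound caps the number of items at $(3t-3i-2)n$, the profit bound forces taking all of $\mathcal{X}_{i+1},\dots,\mathcal{X}_{t-1}$ plus at most $n$ items of $\mathcal{X}_i$, and a final squeeze of the two bounds recovers $s_i=n$ and $\sum_{a\in\mathcal A^*_i}a=B_n$. Your middle step --- the identity $p(\mathcal{X}^*)-w(\mathcal{X}^*)=3B\sum_\ell \ell\, s_\ell$ combined with the exchange argument on the profile $(s_\ell)$ --- is a cleaner formalization of the paper's item-replacement argument, and your single equation $(n-s_i)X=\sigma_i-B_n$ recovers both conclusions at once where the paper derives $\sum_{a\in\mathcal A_i^*}a\le B_n$ and $\ge B_n$ separately.
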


\begin{proof}
As mentioned above, if $\mathcal A^*_i \subseteq \mathcal A_i$ is a solution to the $i$'th \rss\ instance, then the set of encoding items~$\mathcal{X}(\mathcal A^*_i)$ has total weight and profit satisfying the bounds in the lemma. We proceed then to prove the converse direction, namely that a set of items with weight and profit bounded as above implies that instance $\mathcal A_i$ has a solution.

Let $\mathcal{X^*}$ be any subset of encoding items with $w(\mathcal{X}^*) \leq (3t-3i-2) \cdot B$. As $B= B_n+ nX$ and $B_n = X/(3nt)$, we have 
\begin{align*}
w(\mathcal{X}^*) &\le  (3t-3i-2) \cdot B \\
& = (3t-3i-2) \cdot X/(3nt) + (3t-3i-2) \cdot nX \\
& < X + (3t-3i-2) \cdot nX \\
& = X + (n+3(t-i-1)n) \cdot X\enspace.
\end{align*}
Thus, since each encoding item has weight at least $X$, the inequality above shows that $\mathcal{X^*}$ contains at most $n + 3(t-1-i) n$ elements.

Next assume $p(\mathcal{X^*}) \geq (3t-3i-2)\cdot B + 9n (\binom{t}{2}-\binom{i+1}{2}+\frac{i}{3}) \cdot B$. We show that as ${|\mathcal{X^*}| \leq n + 3(t-1-i) n}$, it must be that $\mathcal{X^*}$ contains up to~$n$ items of $\mathcal{X}_i$, and all $3(t-1-i) n$~items of $\mathcal{X}_{i+1} \cup \cdots \cup \mathcal{X}_{t-1}$. That is, $\mathcal{X}^*=\mathcal{X}(\mathcal A^*_i)$ for some subset of $n$ elements $\mathcal A^*_i \subset \mathcal A_i$. Using the facts that $p(x_j^i) \le (1 + 1/(nt))X + i \cdot 3B$ for every $j \in \{1,\ldots,3n\}$ and that $\sum_{a\in \mathcal A_{i_0}} a = 3\cdot B_n$ for every $i_0 \in \{0, \ldots, t-1\}$, we observe that taking any subset of~$n$ elements~$\mathcal A^*_i \subset \mathcal A_i$ gives us a set $\mathcal{X}(\mathcal A^*_i)$ with $n + 3(t-1-i) n$ encoding items that fulfills 
\begin{align*}
p(\mathcal{X}(\mathcal A^*_i)) & \le n \cdot ((1+\frac{1}{tn})X + i \cdot 3B) + \sum_{i_0 = i + 1}^{t-1} \Bigl(3B_n + 3n \cdot (X + i_0 \cdot 3B)\Bigr)\\
& = nX + \frac{X}{t} + i \cdot 3nB + 3(t-1-i) \cdot B + \sum_{i_0 = i + 1}^{t-1} i_0 \cdot 9nB\\
& < B + \frac{X}{t} + (3t - 3i -3) \cdot B +  \Bigl(\binom{t}{2} - \binom{i+1}{2}+ \frac{i}{3}\Bigr) \cdot 9nB\\
& = \frac{X}{t} + (3t -3i - 2) \cdot B  +  \Bigl(\binom{t}{2} - \binom{i+1}{2}+ \frac{i}{3}\Bigr) \cdot 9nB\enspace.
\end{align*}
Replacing any item from~$\mathcal{X}_{i_1}$ with an item~$a_{j_0}^{i_0}$ from~$\mathcal{X}_{i_0}$ for $i_0 < i \leq i_1$, decreases the profit by $3\cdot (i_1 -i_0) \cdot B - a_{j_0}^{i_0} > X/t$. Therefore, any set of items arising through such replacements cannot reach the desired profit.
Further, taking less than $n$ items from~$\mathcal{X}_i$ reduces the desired profit by at least $X > X/t$.
It follows that $\mathcal{X}=\mathcal{X}(\mathcal A^*_i)$ for some subset $\mathcal A^*_i \subset \mathcal A_i$ of up to~$n$ elements. 

To complete the proof, we show that $\mathcal A_i^*$ is indeed a solution to~$\mathcal A_i$. That is, all elements of~$\mathcal A_i^*$ sum up to~$B_n$.  First note that
\begin{align*}
B + (t-1-i)\cdot 3B & \ge w(\mathcal{X})\\
&= w(\mathcal X_i^*) + \sum_{i_0 = i + 1}^{t-1} w(\mathcal X_i)\\
& = \sum_{a \in \mathcal A_i^*} a + n X + \sum_{i_0 =i+1}^{t-1} \Bigl(3B_n + 3n \cdot X\Bigr)\\
& = \sum_{a \in \mathcal A_i^*} a + n \cdot X + (t-1-i) \cdot 3B,
\end{align*}
and so~$\sum_{a\in \mathcal A_i^*} a \le B- n \cdot X = B_n$. Next observe that 
\begin{align*}
B + & (t-1-i) \cdot 3B  +  \Bigr(\binom{t}{2} - \binom{i+1}{2} + \frac{i}{3}\Bigr) \cdot 9nB  \le p(\mathcal{X}) \\ 
&= p(\mathcal X_i^*) + \sum_{i_0 = i + 1}^{t-1} p(\mathcal X_i)\\
& = \sum_{ a\in\mathcal  A_i^*} a + n \cdot (X + i \cdot 3B) + \sum_{i_0 = i+1}^{t-1} \Bigl( 3B + 3n \cdot i_0 \cdot 3B\Bigr)\\
& = \sum_{ a\in \mathcal A_i^*} a + nX + i \cdot 3nB + (t-1-i) \cdot 3B+ \Bigl( \binom{t}{2} - \binom{i+1}{2}\Bigr) \cdot 9nB\\
& = \sum_{ a\in \mathcal A_i^*} a + nX + (t-1-i) \cdot 3B+ \Bigl( \binom{t}{2} - \binom{i+1}{2} + \frac{i}{3}\Bigr) \cdot 9nB,
\end{align*}
and so~$\sum_{a\in \mathcal A_i^*} a \ge B - nX = B_n$. It follows that~$\sum_{a\in \mathcal A_i^*} a = B_n$, and so $\mathcal A^*_i$ is indeed a solution to instance~$\mathcal A_i$.
\end{proof}

\Cref{lem:encoding} implies that if we knew which \rss\ instance $\mathcal A_i$ has a solution, then we could easily set the weight and profit of our composed \knapsack\ instance to encode this solution. However, we do not have prior information about index~$i$. Furthermore, observe that as the value of~$i$ increases, both the profit and weight of the required solution decrease.  Since we do not know the value of~$i$, it would be beneficial to balance all possible choices of~$i$ in terms of weight and profit. 

Thus, the remaining construction focuses on ensuring that solutions to the \knapsack\ instance corresponding to different $\mathcal A_i$'s will all have the same weight and profit. In particular, the construction will guarantee that any choice of $i$ can obtain a profit of $(3t-2) \cdot B + 9  \cdot \binom{t}{2} \cdot nB$,
in addition to some large constant. Considering the profit guaranteed by solutions of \Cref{lem:encoding}, we need to compensate for the loss of the quadratic term 
\begin{equation}
\label{eqn:compensation}%
\Big(\binom{i+1}{2} - \frac{i}{3}\Big) \cdot 9nB\enspace.    
\end{equation}
We call the term above the \emph{compensation term of $i$}. It will play an important role in the remainder of our construction.

\subsection{Instance Selection Gadget}

We next add additional items to our \knapsack\ instance that will serve as an instance selection gadget. This gadget selects an instance of \rss\ for which presumably there is a solution. The gadget consists of two types of items: The \emph{index items} which encode an index of an \rss\ instance~$i \in \{0,\ldots,t-1\}$, and \emph{quadratization items} that help to encode the compensation term of $i$ given in~\autoref{eqn:compensation}.

\paragraph*{Quadratization items.}

The main idea behind the quadratization items is as follows: Any integer~$i\in\{0,\ldots,t-1\}$ can be written as the sum
$$
i = \sum_{k=0}^{\lg t - 1} i(k) \cdot 2^k,
$$
for some binary values $i(0),\ldots,i(\lg t-1) \in \{0,1\}$. Thus, using these same $\lg t$ binary values, we can write the compensation term of $i$ as 
\begin{equation}
\label{eqn:quadratization}%
\begin{split}
\Big( \binom{i+1}{2} - \frac{i}{3} \Big) \cdot 9nB \quad &=\quad  \left( 0.5 \cdot i^2 + \frac{1}{6} \cdot i \right) \cdot 9nB \\
&=\quad \left( 0.5 \cdot \sum_{k=0}^{\lg t - 1} \sum_{\ell=0}^{\lg t - 1} i(k) \cdot i(\ell) \cdot 2^{k+\ell} + \frac{1}{6} \sum_{k =0}^{\lg t-1} i(k)\cdot 2^k \right) \cdot 9nB \\
&=\quad  \Big( 9 \cdot \sum_{\substack{i(k)=1,\\ i(\ell)=1,\\ k < \ell}} 2^{k+\ell}  \,\, +\,\,   4.5 \cdot \sum_{i(k)=1} 2^{k + k}  \,\, +\,\, 1.5 \cdot \sum_{i(k)=1} 2^k \Big) \cdot nB\enspace.
\end{split}
\end{equation}
Thus, we construct $3\cdot \binom{\lg t}{2}+\lg t$ different quadratization items, each modeling the contribution of all possible values of $i(k)$ and $i(\ell)$,  $k \leq \ell \in \{0,\ldots,\lg t -1\}$, in the last equality of \autoref{eqn:quadratization} above.  

Let $Y=t^2 \cdot 3nB$, and observe that $Y$ is larger than the total profit of all encoding items. Furthermore, let $f: \{0,\ldots, \lg t -1\}^2 \to \{0,\ldots,\lg^2 t -1\}$ be any bijective function. For each pair of indices~$k$ and $\ell$ with $0 \leq k < \ell \leq \lg t -1$, we add three quadratization items~$y^{1,0}_{k,\ell}$, $y^{0,1}_{k,\ell}$, and~$y^{1,1}_{k,\ell}$ with the following weight and profit:
\begin{itemize}
\item $w(y^{1,0}_{k,\ell}) = p(y^{1,0}_{k,\ell}) = 3^{f(k,\ell)} \cdot Y$.
\item $w(y^{0,1}_{k,\ell}) = p(y^{0,1}_{k,\ell}) = 3^{f(\ell,k)} \cdot Y$. 
\item $w(y^{1,1}_{k,\ell}) = (3^{f(k,\ell)}+3^{f(\ell,k)}) \cdot Y$ and $p(y^{1,1}_{k,\ell}) =  (3^{f(k,\ell)}+3^{f(\ell,k)}) \cdot Y  + 2^{k+\ell} \cdot 9nB$.
\end{itemize}  
Furthermore, for each $k \in \{0,\ldots,\lg t-1\}$, we add a single quadratization item $y^{1,1}_{k,k}$ with:
\begin{itemize}
\item $w(y^{1,1}_{k,k}) = 3^{f(k,k)} \cdot Y$ and $p(y^{1,1}_{k,k}) = 3^{f(k,k)} \cdot Y + 2^{k + k} \cdot 4.5nB + 2^k \cdot 1.5 n B$.
\end{itemize}  
We use $\mathcal{Y}=\{y^{1,0}_{k,\ell}, y^{1,0}_{k,\ell}, y^{1,0}_{k,\ell} : 0 \leq k < \ell \leq \lg t -1\} \cup \{y^{1,1}_{k,k}: 0 \leq k \leq \lg t -1\}$ to denote the set of all quadratization items.

The additional terms that depend on $Y$ will become clearer when we introduce the index items. But for now, one can observe that the smaller terms used in the profits of $y^{1,1}_{k,\ell}$ and $y^{1,1}_{k,k}$ allow us to encode the compensation term of~$i$. In particular, an easy calculation using \autoref{eqn:quadratization} gives us the following useful lemma:
\begin{lemma}
\label{lem:quadratization}
Let $i \in \{0,\ldots,t-1\}$, and let $i(0),\ldots,i(t-1) \in \{0,1\}$ be binary values such that $i = \sum_k i(k) \cdot 2^k$. Moreover, let $\mathcal{Y}_i$ denote the set of quadratization items defined by
$$
\mathcal{Y}_i = \{ y^{i(k),i(\ell)}_{k,\ell} : 0 \leq k \leq \ell \leq \lg t -1 \},
$$
where $y^{i(k),i(\ell)}_{k,\ell}$ is the empty item (\ie an item with weight and profit 0) if $i (k)= i (\ell)=0$. Then 
$$
p(\mathcal{Y}_i) \quad = \quad w(\mathcal{Y}_i)+ \Big( \binom{i + 1}{2} - \frac{i}{3} \Big) \cdot 9nB\enspace.
$$
\end{lemma}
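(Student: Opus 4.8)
The plan is to compute the difference $p(\mathcal{Y}_i) - w(\mathcal{Y}_i)$ directly from the definitions of the quadratization items and to observe that it coincides with the compensation term of $i$ via the algebraic identity already recorded in \autoref{eqn:quadratization}.

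First I would note that $\mathcal{Y}_i$ contains exactly one item $y^{i(k),i(\ell)}_{k,\ell}$ for every pair $0 \le k \le \ell \le \lg t - 1$, and that for each quadratization item the profit equals the weight plus a ``surplus'' that does not involve $Y$: the surplus is $0$ for the items $y^{1,0}_{k,\ell}$ and $y^{0,1}_{k,\ell}$ (and for the empty item), it is $2^{k+\ell}\cdot 9nB$ for $y^{1,1}_{k,\ell}$ with $k<\ell$, and it is $2^{2k}\cdot 4.5nB + 2^{k}\cdot 1.5nB$ for $y^{1,1}_{k,k}$. Summing over all items of $\mathcal{Y}_i$, every $Y$-scaled weight cancels in $p(\mathcal{Y}_i)-w(\mathcal{Y}_i)$, so only the surpluses survive. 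Since $y^{1,1}_{k,\ell}$ (with $k<\ell$) lies in $\mathcal{Y}_i$ exactly when $i(k)=i(\ell)=1$, and $y^{1,1}_{k,k}$ lies in $\mathcal{Y}_i$ non-emptily exactly when $i(k)=1$, this yields
\[
p(\mathcal{Y}_i) - w(\mathcal{Y}_i) \;=\; 9nB \sum_{\substack{k<\ell\\ i(k)=i(\ell)=1}} 2^{k+\ell} \;+\; 4.5\,nB \sum_{i(k)=1} 2^{2k} \;+\; 1.5\,nB \sum_{i(k)=1} 2^{k}.
\]

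Finally I would recognize the right-hand side as precisely the last line of \autoref{eqn:quadratization}, which that equation shows equals $\bigl(\binom{i+1}{2}-\tfrac{i}{3}\bigr)\cdot 9nB$ --- using $\binom{i+1}{2}-\tfrac{i}{3} = \tfrac12 i^2 + \tfrac16 i$, expanding $i^2 = \sum_k\sum_\ell i(k) i(\ell) 2^{k+\ell}$, splitting the double sum according to $k<\ell$, $k=\ell$, $k>\ell$, and using $i(k)^2=i(k)$. This gives the claimed identity. The only points needing care are bookkeeping: checking that the selection rule defining $\mathcal{Y}_i$ picks, for each pair $(k,\ell)$, exactly the item whose surplus matches the corresponding summand --- in particular that the diagonal $k=\ell$ produces the $4.5$- and $1.5$-weighted terms while each off-diagonal pair is counted once through $k<\ell$ --- and noting that the bijection $f$ is irrelevant here, since the $Y$-scaled weights it governs cancel in $p-w$. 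I do not expect a genuine obstacle: this lemma is the ``easy calculation'' promised before its statement, and is in essence a restatement of \autoref{eqn:quadratization}.
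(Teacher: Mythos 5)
Your proposal is correct and matches the paper's intent exactly: the paper offers no written proof for this lemma, stating only that it follows by ``an easy calculation using \autoref{eqn:quadratization},'' and your computation of $p(\mathcal{Y}_i)-w(\mathcal{Y}_i)$ as the sum of the non-$Y$ surpluses, matched against the last line of \autoref{eqn:quadratization}, is precisely that calculation. The bookkeeping (off-diagonal pairs contributing $2^{k+\ell}\cdot 9nB$ only when $i(k)=i(\ell)=1$, diagonal terms contributing the $4.5$- and $1.5$-weighted summands when $i(k)=1$, and the irrelevance of $f$ since the $Y$-terms cancel) is all accurate.
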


\paragraph*{Index items.}

The index items ensure that only quadratization items that correspond to subsets~$\mathcal{Y}_i$ as in \Cref{lem:quadratization} above can be picked into any solution of our \knapsack\ instance. In particular, the index items will encode the selection of an index~$i \in \{0,\ldots,\lg t -1\}$ that will be compatible with the selection of a subset $\mathcal{Y}_i$ of quadratization items. 

Let $Z=\lg^2 t Y^2 \cdot 3^{\lg^2 t}$, and observe that $Z$ is larger than the profit of all encoding and quadratization items in total. For each $k \in \{0,\ldots,\lg t-1 \}$, we construct two \emph{index items}~$z^0_k$ and~$z^1_k$ corresponding to selecting either $i(k)=1$ or $i(k)=0$ in the binary representation $i(0),\ldots,i(\lg t-1)$ of~$i$. The weight and profit of these two items are defined by:
\begin{itemize}
\item $w(z^0_k) = p(z^0_k) = 2^k \cdot Z  + \sum^{\lg t -1}_{\ell=0} 3^{f(k,\ell)} \cdot Y$.
\item $w(z^1_k) = p(z^1_k) = 2^{k} \cdot Z  + 2^k \cdot 3B $.
\end{itemize} 
We use $\mathcal{Z}=\{z^0_k,z^1_k : 0 \leq k \leq \lg t-1\}$ to denote the set of all index items. 

Let $i \in \{0,\ldots,t -1\}$, and let $i(0),\ldots,i(t-1) \in \{0,1\}$ be binary values such that $i = \sum_k i(k) \cdot 2^k$. Observe that the set of index items $\mathcal{Z}_i$ defined by
$$
\mathcal{Z}_i = \{ z^{i(k)}_k : 0 \leq k \leq \lg t -1 \}
$$
naturally corresponds to index~$i$. In the lemma below, we show that due to our selection of the large value~$Z$, any set of items of sufficiently small weight and sufficiently large profit contains a subset of index items that correspond precisely to some index $i\in\{0,\ldots,t-1\}$. We let $w_Z(\cdot)$ denote the weight function $w_Z(x)=\lfloor w(x)/Z \rfloor$, and $p_Z(\cdot)$ denote the profit function $p_Z(x)=\lfloor p(x)/Z \rfloor$.
\begin{lemma}
\label{lem:index}%
Let $\mathcal{S} \subseteq \mathcal{X} \cup \mathcal{Y} \cup \mathcal{Z}$ be a set of items with $w_Z(\mathcal{S}) \leq t-1$ and $p_Z (\mathcal{S}) \ge t-1$. Then there exists some $i \in \{0,\ldots,t -1\}$ for which $\mathcal{S} \cap \mathcal{Z} = \mathcal{Z}_i$.
\end{lemma}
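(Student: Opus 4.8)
The plan is to exploit the fact that $Z$ is chosen so large that it "separates" the index items from everything else in $\mathcal{S}$, and then to decode the index $i$ bit by bit by invoking \Cref{lem:algebraic} with base $b=2$.

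First I would record the key magnitude facts. By the choice $Z=\lg^2 t\cdot Y^2\cdot 3^{\lg^2 t}$, and since $Y=t^2\cdot 3nB$ already exceeds the total profit of all encoding items, the combined weight and the combined profit of \emph{all} items of $\mathcal{X}\cup\mathcal{Y}$ is strictly less than $Z$ (weight is at most profit for each such item); in particular $w_Z(x)=p_Z(x)=0$ for every $x\in\mathcal{X}\cup\mathcal{Y}$. Moreover, every index item $z^b_k$ has weight equal to its profit, of the form $2^k\cdot Z+r^b_k$ with $0\le r^b_k<Z$: for $z^1_k$ the remainder $2^k\cdot 3B\le t\cdot 3B<Y<Z$, and for $z^0_k$ the remainder $\sum_{\ell=0}^{\lg t-1}3^{f(k,\ell)}\cdot Y\le \lg t\cdot 3^{\lg^2 t-1}\cdot Y<Z$. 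Hence $w_Z(z^b_k)=p_Z(z^b_k)=2^k$ for each $k\in\{0,\dots,\lg t-1\}$ and each $b\in\{0,1\}$.

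Next I would combine the two hypotheses. Writing $\mathcal{S}_{\mathcal{Z}}:=\mathcal{S}\cap\mathcal{Z}$ and using the previous paragraph, we get $w_Z(\mathcal{S})=\sum_{z^b_k\in\mathcal{S}_{\mathcal{Z}}}2^k=p_Z(\mathcal{S})$, because every index item satisfies $w=p$ while the items of $\mathcal{S}$ outside $\mathcal{Z}$ contribute $0$ to both $w_Z(\mathcal{S})$ and $p_Z(\mathcal{S})$. Therefore the assumption $w_Z(\mathcal{S})\le t-1\le p_Z(\mathcal{S})$ collapses to the single equality $w_Z(\mathcal{S})=p_Z(\mathcal{S})=t-1$, i.e.\ $\sum_{z^b_k\in\mathcal{S}_{\mathcal{Z}}}2^k=t-1=\sum_{k=0}^{\lg t-1}2^k$. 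For each $k$ let $x_k:=|\{b\in\{0,1\}: z^b_k\in\mathcal{S}_{\mathcal{Z}}\}|\in\{0,1,2\}$, so that this reads $\sum_{k=0}^{\lg t-1}x_k\cdot 2^k=\sum_{k=0}^{\lg t-1}2^k$ with every $x_k\in\{0,1,2\}$. Applying \Cref{lem:algebraic} with base $b=2$ and $\lg t$ terms gives the unique solution $x_0=\dots=x_{\lg t-1}=1$. Thus exactly one of $z^0_k,z^1_k$ lies in $\mathcal{S}$ for every $k$; setting $i(k):=1$ if $z^1_k\in\mathcal{S}$ and $i(k):=0$ otherwise, and $i:=\sum_k i(k)\cdot 2^k\in\{0,\dots,t-1\}$, we conclude $\mathcal{S}\cap\mathcal{Z}=\{z^{i(k)}_k:0\le k\le\lg t-1\}=\mathcal{Z}_i$.

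I do not expect a genuine obstacle here: the only point requiring care is the bookkeeping in the first step, namely verifying that $Z$ dominates both the total weight/profit of $\mathcal{X}\cup\mathcal{Y}$ and the individual low-order parts $r^b_k$ of the index items, so that their $Z$-quotients behave exactly like the digits $2^k$. The conceptual heart is simply that $w=p$ on every index item forces the two inequalities to become one equality $w_Z(\mathcal{S})=t-1$, after which \Cref{lem:algebraic} finishes the argument.
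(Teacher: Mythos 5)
Your proof is correct and follows essentially the same route as the paper's: observe that items outside $\mathcal{Z}$ contribute nothing to $w_Z$ and $p_Z$, use $w=p$ on index items to collapse the two inequalities into $w_Z(\mathcal{S}\cap\mathcal{Z})=\sum_k|\{z^0_k,z^1_k\}\cap\mathcal{S}|\cdot 2^k=t-1$, and invoke \Cref{lem:algebraic} with base $2$ to force exactly one of $z^0_k,z^1_k$ per bit. Your write-up is in fact more explicit than the paper's about why the $Z$-quotients of the index items equal exactly $2^k$ and why $\mathcal{X}\cup\mathcal{Y}$ is invisible at the $Z$-scale, which the paper leaves implicit.
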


\begin{proof}
Let $\mathcal{Z}^* = \mathcal{S} \cap \mathcal{Z}$ denote the subset of index items in $\mathcal{S}$. 
As $w_Z(\mathcal{Z}^*) = p_Z (\mathcal{Z}^*)$, we have~$w_Z(\mathcal{Z}^*)=t-1$. Then due to the construction of the weights of the index items, it holds that
$$
w_Z(\mathcal{Z}^*) \,\,=\,\, \sum_{k=0}^{\lg t} |\{z^0_k,z^1_k\} \cap \mathcal{Z}^*| \cdot 2^k \,\,=\,\, t-1\enspace.
$$
Due to \Cref{lem:algebraic}, this equality can only hold if $|\{z^0_k,z^1_k\} \cap \mathcal{Z}^*|=1$ for all $k \in \{0,\ldots,\lg t -1\}$. Define 
$$
i(k) = 
\begin{cases}
0 &: z^0_k \in \mathcal{Z}^*\\
1 &: z^1_k \in \mathcal{Z}^*
\end{cases}
$$
for each $k \in \{0,\ldots,\lg t -1\}$. Then $\mathcal{Z}^* = \mathcal{Z}_i$ for $i = \sum^{\lg t- 1}_{k=0} i(k) \cdot 2^k$. 
\end{proof}

Now let us address the terms that depend on~$Y$ in the profit and weight of the instance selection items. Define $T$ to be the constant $T:= \sum^{\lg^2 t -1}_{k=0} 3^k$. Now consider some set~$\mathcal{Z}_i$ of index items corresponding to index $i\in\{0,\ldots,t-1\}$. Let $w_Y(\cdot)$ denote the weight function~$w_Y(x)=\allowbreak{\lfloor (w(x)-Z \cdot w_Z(x))/Y \rfloor}$, and similarly define the profit function $p_Y(\cdot)$ as~${p_Y(x)=\lfloor (p(x)-Z\cdot p_Z(x))/Y \rfloor}$. Then one can observe that, by construction of the weights and profits above, we have that both~${w_Y(\mathcal{Z}_i)+w_Y(\mathcal{Y}_i)}$ and $p_Y(\mathcal{Z}_i)+p_Y(\mathcal{Y}_i)$ equal $T$. Moreover, any other set of items with $w_Y$-weight at most $T$ will have lesser profit. This ensures a compatible selection of the index items and the quadratization items, formally proven in the lemma below.
\begin{lemma}
\label{lem:compatibility}%
Let $\mathcal{S} \subseteq \mathcal{X} \cup \mathcal{Y} \cup \mathcal{Z}$ be a set of items with weight $w_Z(\mathcal{S}) \leq (t-1) $, $w_Y(\mathcal{S}) \leq T$, $p_Z (\mathcal{S}) \ge (t-1)$, $p_Y (\mathcal{S}) \ge T$, and there is no set~$\mathcal{S}' \subseteq \mathcal{X} \cup \mathcal{Y} \cup \mathcal{Z}$ with $w(\mathcal{S}') \le w (\mathcal{S})$ and $p(\mathcal{S}') > p (\mathcal{S})$. Then $\mathcal{S} \cap \mathcal{Y} = \mathcal{Y}_i$ and $\mathcal{S} \cap \mathcal{Z} = \mathcal{Z}_i$ for some $i \in \{0,\ldots,t-1\}$.
\end{lemma}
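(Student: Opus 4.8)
The plan is to leverage \Cref{lem:index} to first pin down the index items, and then use the $Y$-scale to pin down the quadratization items compatibly. Concretely, I would first apply \Cref{lem:index} to $\mathcal{S}$: since $w_Z(\mathcal{S}) \le t-1$ and $p_Z(\mathcal{S}) \ge t-1$, there is some $i \in \{0,\ldots,t-1\}$ with $\mathcal{S} \cap \mathcal{Z} = \mathcal{Z}_i$. This already gives the second conclusion of the lemma. Observe that the only items with nonzero $w_Z$-weight (equivalently, nonzero $p_Z$-value) are the index items, so $w_Z(\mathcal{S}) = w_Z(\mathcal{Z}_i) = i$ and $p_Z(\mathcal{S}) = p_Z(\mathcal{Z}_i) = i$; in particular the $Z$-scale contributions of $\mathcal{S}$ are exactly balanced, and the inequalities $w_Z(\mathcal{S}) \le t-1$, $p_Z(\mathcal{S}) \ge t-1$ leave no slack to exploit at this scale beyond fixing $\mathcal{Z}_i$.

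Next I would work at the $Y$-scale. The items whose weight has a nonzero $Y$-part (after subtracting the $Z$-part) are exactly the quadratization items $\mathcal{Y}$ and the ``$z^0_k$'' index items; encoding items all have weight $<Y$ so contribute $0$ to $w_Y$. Write $i$ in binary as $i = \sum_k i(k)2^k$. The index items in $\mathcal{Z}_i$ contribute $\sum_{k : i(k)=0} \sum_{\ell} 3^{f(k,\ell)}$ to $w_Y(\mathcal{S})$, and the quadratization items in $\mathcal{S}\cap\mathcal{Y}$ contribute a sum of powers $3^{f(\cdot,\cdot)}$ (and sums of two such powers for the $y^{1,1}_{k,\ell}$ items). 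The key point is that $w_Y(\mathcal{Z}_i) + w_Y(\mathcal{Y}_i) = T = \sum_{m=0}^{\lg^2 t - 1} 3^m$: the $z^0_k$ items for $i(k)=0$ supply exactly the powers $3^{f(k,\ell)}$ over all $\ell$, while $\mathcal{Y}_i$ supplies $3^{f(k,\ell)}$ for every pair with $i(k)=i(\ell)=1$ (from $y^{1,1}_{k,\ell}$ contributing both $3^{f(k,\ell)}$ and $3^{f(\ell,k)}$, from $y^{1,0}_{k,\ell}$ contributing $3^{f(k,\ell)}$, etc.) — and one checks via the bijectivity of $f$ that every exponent $m \in \{0,\ldots,\lg^2 t-1\}$ is hit exactly once. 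Since $w_Y(\mathcal{S}) \le T$ and each coefficient appearing is in $\{0,1,2\} \subseteq \{0,\ldots,3\}$, I would invoke \Cref{lem:algebraic} (with $b = 3$) on the expansion $\sum_m x_m 3^m = T$: the unique solution has every $x_m = 1$, forcing $\mathcal{S}$ to contain, for each exponent $m = f(k,\ell)$, exactly the right combination of a $z^0$-item or quadratization item to supply $3^m$ exactly once. This rigidity forces $\mathcal{S} \cap \mathcal{Y}$ to be precisely $\mathcal{Y}_i$: for a pair $k < \ell$, if $i(k)=i(\ell)=1$ then both $3^{f(k,\ell)}$ and $3^{f(\ell,k)}$ must come from $\mathcal{Y}$ (the $z^0_k, z^0_\ell$ items are absent), which only $y^{1,1}_{k,\ell}$ can do without overshooting; if exactly one of $i(k),i(\ell)$ is $1$, say $i(k)=1$, then $3^{f(\ell,k)}$ is already supplied by $z^0_\ell$ so $3^{f(k,\ell)}$ must be supplied by a quadratization item, and only $y^{1,0}_{k,\ell}$ fits; if $i(k)=i(\ell)=0$ then both powers come from the $z^0$ items and no $y_{k,\ell}$-item may be taken; the diagonal items $y^{1,1}_{k,k}$ are handled the same way with the single power $3^{f(k,k)}$. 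Thus $\mathcal{S}\cap\mathcal{Y} = \mathcal{Y}_i$, as claimed.

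The remaining loose end is where the hypothesis $p_Y(\mathcal{S}) \ge T$ and the no-better-solution (Pareto-optimality) condition get used. I would use $p_Y(\mathcal{S}) \ge T$ symmetrically to the weight argument to rule out, e.g., taking $y^{0,1}_{k,\ell}$ in place of $y^{1,0}_{k,\ell}$: such swaps preserve neither the exponent pattern nor can they be patched to reach $p_Y = T$ without violating $w_Y(\mathcal{S})\le T$, because the $z^0$-items are already fixed by $\mathcal{Z}_i$ and determine which of $3^{f(k,\ell)}, 3^{f(\ell,k)}$ still needs covering. Finally, the Pareto-optimality hypothesis is what lets me assume $\mathcal{S}$ does not merely satisfy the weight/profit bounds but is ``tight'' — it rules out pathological $\mathcal{S}$ that includes, say, extra quadratization items whose weight happens to fit because some exponent is double-counted at a higher-order carry; strictly, \Cref{lem:algebraic} already forbids carries since all coefficients stay in $\{0,\ldots,3\}$, so the main role of Pareto-optimality is to guarantee that among the sets meeting the $w_Z, w_Y, p_Z, p_Y$ constraints, $\mathcal{S}$ is one that actually achieves equality everywhere, which is what makes the \Cref{lem:algebraic} argument bite. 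The main obstacle I anticipate is bookkeeping: carefully verifying that the coefficient of each $3^m$ stays within $\{0,\ldots,3\}$ for every $\mathcal{S}$ satisfying the hypotheses (so that \Cref{lem:algebraic} applies) requires checking that at most two items — one index item $z^0_k$ and one quadratization item — can contribute to a given exponent $f(k,\ell)$, and that $y^{1,1}_{k,\ell}$-type items contributing to two exponents at once do not create a coefficient of $3$ or more at either; this is where I would spend the bulk of the careful case analysis.
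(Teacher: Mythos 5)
Your overall structure matches the paper's: first invoke \Cref{lem:index} to pin down $\mathcal{S}\cap\mathcal{Z}=\mathcal{Z}_i$, then use the base-$3$ uniqueness of \Cref{lem:algebraic} at the $Y$-scale (coefficients staying in $\{0,\ldots,3\}$, which you correctly verify) to conclude that every exponent $f(k,\ell)$ is covered exactly once. But there is a genuine gap in the final step. For a pair $k<\ell$ with $i(k)=i(\ell)=1$ you claim that supplying both $3^{f(k,\ell)}$ and $3^{f(\ell,k)}$ is something ``only $y^{1,1}_{k,\ell}$ can do without overshooting.'' That is false: the pair $\{y^{1,0}_{k,\ell},\,y^{0,1}_{k,\ell}\}$ supplies exactly the same two powers, each exactly once, and has exactly the same $w_Y$- and $p_Y$-value as $y^{1,1}_{k,\ell}$. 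No argument at the $Z$- or $Y$-scale can distinguish these two choices, so the exponent-counting rigidity does \emph{not} force $\mathcal{S}\cap\mathcal{Y}=\mathcal{Y}_i$; it only forces $\mathcal{S}\cap\mathcal{Y}$ to be $\mathcal{Y}_i$ up to such replacements. This degenerate case is precisely why the lemma carries the Pareto-optimality hypothesis: if $\mathcal{S}$ contained $y^{1,0}_{k,\ell}$ and $y^{0,1}_{k,\ell}$, then $\mathcal{S}':=(\mathcal{S}\setminus\{y^{1,0}_{k,\ell},y^{0,1}_{k,\ell}\})\cup\{y^{1,1}_{k,\ell}\}$ has the same weight but profit larger by $2^{k+\ell}\cdot 9nB$ (the extra low-order term in $p(y^{1,1}_{k,\ell})$), contradicting the assumption that no $\mathcal{S}'$ dominates $\mathcal{S}$.

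Relatedly, you misattribute the role of that hypothesis. You suggest it is needed to force equality $w_Y(\mathcal{S})=p_Y(\mathcal{S})=T$ ``so that \Cref{lem:algebraic} bites,'' but equality already follows for free: every item has $w_Y(x)=p_Y(x)$ (the low-order profit bonuses are all below $Y$), so $w_Y(\mathcal{S})\le T\le p_Y(\mathcal{S})=w_Y(\mathcal{S})$ forces equality without any optimality assumption. The Pareto-optimality condition is used solely for the swap argument above. A cosmetic slip: $w_Z(\mathcal{Z}_i)=\sum_k 2^k=t-1$, not $i$.
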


\begin{proof}
Let $\mathcal{Z}^* = \mathcal{S} \cap \mathcal{Z}$ and $\mathcal{Y}^* = \mathcal{S} \cap \mathcal{Y}$. As $w_Z (\mathcal{S}) \le (t-1)$ and $p_Z (\mathcal{S}) \ge (t-1)$, by \Cref{lem:index} we have that $\mathcal{Z}^* = \mathcal{Z}_i$ for some $i \in \{0,\ldots,t-1\}$. Thus, to complete the proof, we focus on showing that~$\mathcal{Y}^* = \mathcal{Y}_i$. Let $i(0),\ldots,i(\lg t-1) \in \{0,1\}$ be such that $i = \sum_k i(k) \cdot 2^k$.

Observe that by the construction of the weights and profits of the index items, we have $w_Y(\mathcal{Z}_i)=p_Y(\mathcal{Z}_i)=\sum_{\alpha(k)=0} \sum_\ell 3^{f(k,\ell)} $. From this, one can see that both $w_Y(\mathcal{Z}_i)+w_Y(\mathcal{Y}^*)$ and $w_Y(\mathcal{Z}_i)+p_Y(\mathcal{Y}^*)$ equal
\begin{equation*}
\begin{split}
\Big( \sum_{\substack{i(k) = 0,\\ 0 \leq \ell \leq \lg t -1}} \!\!\! 3^{f(k,\ell)} \Big)  &+ \Big( \sum_{y^{1,0}_{k,\ell} \in \mathcal{Y}_i} \!\!\! 3^{f(k,\ell)}  + \!\!\! \sum_{y^{0,1}_{k,\ell} \in \mathcal{Y}_i} \!\!\! 3^{f(\ell,k)}  + \!\!\!\sum_{y^{1,1}_{k,\ell} \in \mathcal{Y}_i} \!\!\! (3^{f(k,\ell)}+3^{f(\ell,k)}) +
\!\!\!\sum_{y^{1,1}_{k,k} \in \mathcal{Y}_i} \!\!\! 3^{f(k,k)} \Big)  =\\ 
\Big(\sum_{\substack{i(k) = 0,\\ 0 \leq \ell \leq \lg t -1}} \!\!\! 3^{f(k,\ell)} \Big)  &+ \Big( \!\!\! \sum_{\substack{i(k) = 1,\\ i(\ell) = 0,\\ k< \ell}} \!\!\! 3^{f(k,\ell)} + \!\!\!\sum_{\substack{i(k) = 1,\\ i(\ell) = 0,\\ k> \ell}} \!\!\! 3^{f(k,\ell)}  + \!\!\!\sum_{\substack{i(k) = 1,\\ i(\ell) = 1,\\ k \neq \ell}} \!\!\! 3^{f(k,\ell)} + \!\!\! \sum_{i(k) = 1} \!\!\! 3^{f(k,k)} \Big)\enspace.\\
\end{split}
\end{equation*}
Note that as $w_Y(\mathcal{Z}_i) + w_Y(\mathcal{Y}^*) =w_Y(\mathcal{S}) \leq T$, the sum above is bounded from above by~$T$. Moreover, by \Cref{lem:algebraic}, the only way this reaches the bound with equality is if we have 
$$
w_Y(\mathcal{Y}^*)  \,\,=  \sum_{\substack{i(k)=1,\\ i(\ell) = 0,\\ k< \ell}} \!\!\! 3^{f(k,\ell)} + \!\!\!\sum_{\substack{i(k) = 1,\\ i(\ell) = 0,\\ k> \ell}} \!\!\! 3^{f(k,\ell)}  + \!\!\!\sum_{\substack{i(k) = 1,\\ i(\ell) = 1,\\ k \neq \ell}} \!\!\! 3^{f(k,\ell)} + \!\!\! \sum_{i(k) = 1} \!\!\! 3^{f(k,k)} \,\,=\,\, \sum_{\substack{i(k) = 0,\\ 0 \leq \ell \leq \lg t -1}} \!\!\! 3^{f(k,\ell)}
$$
which will then give us 
$$
w_Y(\mathcal{Z}_i) + w_Y(\mathcal{Y}^*)=\sum_{\substack{i(k) = 0,\\ 0 \leq \ell \leq \lg t -1}} \!\!\! 3^{f(k,\ell)}   + \sum_{\substack{i(k) = 1,\\ 0 \leq \ell \leq \lg t -1}} \!\!\! 3^{f(k,\ell)}  = \sum^{\lg^2 t -1}_{k=0} 3^k  = T\enspace.
$$
(Here, the penultimate equality follows because $f(\cdot,\cdot)$ is bijective.)

Note that by construction and \Cref{lem:algebraic}, the only sets of quadratization items $\mathcal{Y}^*$ with $w_Y(\mathcal{Y}^*)=p_Y(\mathcal{Y}^*)=\sum_{\alpha(k)=1} \sum_\ell 3^{f(k,\ell)}$ are either $\mathcal{Y}_i$, or any set of quadratization items obtained from $\mathcal{Y}_i$ by replacing some~$y^{1,1}_{k,\ell} \in \mathcal{Y}_i$ with $y^{1,0}_{k,\ell}$ and $y^{0,1}_{k,\ell}$.
If $\mathcal{S}$ contained $y^{1, 0}_{k, \ell}$ and $y^{0,1}_{\ell,k}$ for some $0 \le k < \ell \le \lg t-1$, then $\mathcal{S}' := (\mathcal{S} \setminus \{y^{1,0}_{k, \ell}, y^{0,1}_{k, \ell}\}) \cup \{y^{1,1}_{k, \ell}\}$ satisfies $w(\mathcal{S}') = w (\mathcal{S})$ and $p(\mathcal{S}') > p(\mathcal{S})$, a contradiction to the definition of~$\mathcal{S}$.
Thus,
we conclude that $\mathcal{Y}^*=\mathcal{Y}_i$ and the lemma is proven. 
\end{proof}

\subsection{Correctness}

Our entire \knapsack\ instance consists of all items $\mathcal{X} \cup \mathcal{Y} \cup \mathcal{Z}$. An overview of the weight and profit of each item can be found in \Cref{tab:wsharp}. We set the weight $W$ of the \knapsack\ instance to 
$$
W := (t-1) \cdot Z  \,\, + \,\, T \cdot Y  \,\, + \,\, (3t-2)\cdot B
$$
and the desired profit $P$ to
\begin{equation*}
\begin{split}
P &:=  W \,\ + \,\,  \binom{t}{2}  \cdot 9nB\\ 
&= (t-1) \cdot Z  \,\, + \,\, T \cdot Y  \,\, + \,\, (3t-2)\cdot B \,\, + \,\, \binom{t}{2}  \cdot 9nB\enspace.      
\end{split}
\end{equation*}
In the next two lemmas below we prove the correctness of our constructed composition. 

\begin{table}
\begin{center}
\begin{tabular}{c c c c}
\toprule
Item & Weight & Profit & Index Range \\

\midrule
\rule{0pt}{1.1em}
$z^1_k$ & $2^{k} \cdot Z + 2^k \cdot 3B$ & $2^k \cdot Z + 2^k \cdot 3B $& $0 \leq k \leq \lg t -1$\\

\midrule
\rule{0pt}{1.1em}
$z^0_k $ & $2^k \cdot Z + \sum_\ell 3^{f(k,\ell)} \cdot Y$ & $2^k \cdot Z + \sum_\ell 3^{f(k,\ell)} \cdot Y$ & $0 \leq k \leq \lg t -1$ \\

\midrule
\rule{0pt}{1.1em}
$y^{1,0}_{k, \ell} $ & $3^{f(k,\ell)} \cdot Y$ & $3^{f(k,\ell)} \cdot Y$ & $0 \leq k < \ell \leq \lg t -1$\\

\midrule
\rule{0pt}{1.1em}
$y^{0,1}_{k, \ell} $ & $3^{f(\ell,k)} \cdot Y$ & $3^{f(\ell,k)} \cdot Y$ & $0 \leq k < \ell \leq \lg t -1$\\

\midrule
\rule{0pt}{1.1em}
$y^{1,1}_{k, \ell} $ & $(3^{f(k,\ell)} + 3^{f(\ell,k)})\cdot Y$ & $(3^{f(k,\ell)} + 3^{f(\ell,k)})\cdot Y + 2^{k+\ell} \cdot 9nB$ & $0 \leq k < \ell \leq \lg t -1$ \\

\midrule
\rule{0pt}{1.1em}
$y^{1,1}_{k, k} $ & $3^{f(k,k)} \cdot Y$ & $3^{f(k,k)} \cdot Y + 2^{k+k} \cdot 4.5nB + 2^k \cdot 1.5nB$ & $0 \leq k \leq \lg t -1$ \\

\midrule
\rule{0pt}{1.1em}
\multirow{2}*{$x^i_j$} & \multirow{2}*{$X + a^i_j$} & \multirow{2}*{$X+a^i_j + i \cdot 3B$} & $0 \leq i  \leq t-1$,\\
 &  &  & $1 \leq j \leq 3n$ \\
\bottomrule
\end{tabular}
\caption{The weights and profits of the items used in the proof of \Cref{thm:main} for parameter~$w_{\#}$. The three large constants used in the proof are $X = 3tn \cdot B_n$, $Y=t^2 \cdot 3nB$, and $Z=\lg^2 t Y^2 \cdot 3^{\lg^2 t}$.}
\label{tab:wsharp}%
\end{center}
\end{table}

\begin{lemma}
\label{lem:ForwardDirection}%
If $\mathcal A_i$ is a ``yes"-instance of \rss\ for some $i \in \{0,\ldots,t-1\}$ then there exists a subset of items $\mathcal{S} \subseteq \mathcal{X} \cup \mathcal{Y} \cup \mathcal{Z}$ with $w(\mathcal{S}) \leq W$ and $p(\mathcal{S}) \geq P$.
\end{lemma}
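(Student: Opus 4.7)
The plan is to exhibit the obvious candidate subset and verify the weight and profit bounds by a term-by-term cancellation. Concretely, suppose $\mathcal A_i^* \subseteq \mathcal A_i$ is a solution to the $i$th \rss\ instance, and let $i(0),\ldots,i(\lg t-1) \in \{0,1\}$ be the binary representation of $i$. I would set
\[
\mathcal{S} \;:=\; \mathcal{X}(\mathcal A_i^*) \;\cup\; \mathcal{Y}_i \;\cup\; \mathcal{Z}_i,
\]
where $\mathcal{X}(\mathcal A_i^*)$ is the encoding set from \Cref{subsec:encoding}, and $\mathcal{Y}_i,\mathcal{Z}_i$ are the quadratization and index subsets defined earlier using the bits $i(k)$. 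The correctness claim then reduces to showing that $w(\mathcal{S})=W$ and $p(\mathcal{S}) = W + \binom{t}{2}\cdot 9nB = P$, both with equality.

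For the weight, I would separately tally the contributions at the scales $Z$, $Y$, and $B$. The $Z$-scale contribution comes exclusively from $\mathcal{Z}_i$ and yields $\sum_{k=0}^{\lg t-1} 2^k Z = (t-1)Z$. The $Y$-scale contribution is $w_Y(\mathcal{Z}_i) + w_Y(\mathcal{Y}_i)$; by construction, picking $z_k^0$ contributes $\sum_\ell 3^{f(k,\ell)}$ exactly when $i(k)=0$, while $\mathcal{Y}_i$ contributes $\sum_\ell 3^{f(k,\ell)}$ exactly when $i(k)=1$. Summing over all $(k,\ell)$ and using that $f$ is a bijection yields $T \cdot Y$. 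Finally, at the $B$-scale the encoding items contribute $(3t-3i-2)B$ (as computed just before \Cref{lem:encoding}, using $|\mathcal A_i^*|=n$ and $\sum_{a\in \mathcal A_i^*} a = B_n$), and the index items $z_k^1$ contribute an additional $\sum_{i(k)=1} 2^k \cdot 3B = i \cdot 3B$. These combine to $(3t-2)B$, giving $w(\mathcal{S}) = (t-1)Z + TY + (3t-2)B = W$.

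For the profit, I would use that each item's profit matches its weight except for two sources of ``extra'' profit: the encoding items of $\mathcal{X}(\mathcal A_i^*)$ carry an excess of $(\binom{t}{2} - \binom{i+1}{2} + i/3) \cdot 9nB$ (the forward direction of \Cref{lem:encoding}), and the quadratization items carry an excess of exactly $(\binom{i+1}{2} - i/3) \cdot 9nB$ by \Cref{lem:quadratization}. The $i$-dependent terms cancel precisely, leaving $p(\mathcal{S}) = w(\mathcal{S}) + \binom{t}{2} \cdot 9nB = P$.

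There is no real obstacle here; this is the straightforward direction of the composition, and all the design work has already been absorbed into the definitions of $X$, $Y$, $Z$, and the quadratization items. The only thing to be careful about is making sure that the compensation identity from \Cref{lem:quadratization} is invoked with the same bits $i(k)$ that index $\mathcal{Z}_i$, so that the $B$-scale excesses from $\mathcal{X}(\mathcal A_i^*)$ and $\mathcal{Y}_i$ are matched pointwise, and that the $Y$-scale bookkeeping uses the fact that the choices in $\mathcal{Z}_i$ and $\mathcal{Y}_i$ are ``complementary'' on each row $k$ (contributing $\sum_\ell 3^{f(k,\ell)}$ regardless of $i(k)$), which is exactly what forces the sum $T \cdot Y$ to come out cleanly.
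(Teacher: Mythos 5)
Your proposal is correct and follows essentially the same route as the paper's proof: take $\mathcal{S}=\mathcal{X}(\mathcal A_i^*)\cup\mathcal{Y}_i\cup\mathcal{Z}_i$, verify $w(\mathcal{S})=W$ by matching the $Z$-, $Y$-, and $B$-scale contributions (with the complementarity of $\mathcal{Z}_i$ and $\mathcal{Y}_i$ at the $Y$-scale yielding $T\cdot Y$ via bijectivity of $f$), and verify $p(\mathcal{S})=P$ by cancelling the compensation term from \Cref{lem:quadratization} against the profit deficit of the encoding items. The only difference is organizational—you tally by scale while the paper tallies by item set—which does not change the argument.
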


\begin{proof}
Suppose $\mathcal A_i$ is a ``yes"-instance of \rss\ for some $i \in \{0,\ldots,t-1\}$, and let $i(0),\ldots,i(\lg t- 1) \in \{0,1\}$ be such that $i = \sum_k i(k) \cdot 2^k$. Then $w(\mathcal{X}(\mathcal A_i))=(3t-3i-2) \cdot B$ as discussed in \Cref{subsec:encoding}. Furthermore, $w(\mathcal{Y}_i)=w_Y(\mathcal{Y}_i) \cdot Y= (T-w_Y(\mathcal{Z}_i))\cdot Y$ as is shown in the proof of \Cref{lem:compatibility}. Finally, we have $w_Z(\mathcal{Z}_i)=(t-1) \cdot Z$, and 
$$
w(\mathcal{Z}_i) \,\,=\,\, w_Z(\mathcal{Z}_i) + w_Y(\mathcal{Z}_i) + \!\!\!\sum_{i(k)=1} 2^k \cdot 3B \,\,=\,\, (t-1) \cdot Z + w_Y(\mathcal{Z}_i) \cdot Y + i \cdot 3B\enspace.
$$  
So altogether we have 
\begin{equation*}
\begin{split}
w(\mathcal{X}(\mathcal A_i) \cup \mathcal{Y}_i \cup \mathcal{Z}_i) \,\,&=\,\, (3t-3i-2) \cdot B \,\,+\,\, (T-w_Y(\mathcal{Z}_i))\cdot Y\\
&+(t-1) \cdot Z \,+\, w_Y(\mathcal{Z}_i)\cdot Y \,+\, i \cdot 3B \\
& = (t-1) \cdot Z \,+\, T\cdot Y \,+\, (3t-2) \cdot B \,\, = W\enspace. 
\end{split}
\end{equation*}

Let us next calculate the profit of $\mathcal{X}(\mathcal A_i) \cup \mathcal{Y}_i \cup \mathcal{Z}_i$. Recall that $p(\mathcal{X}(\mathcal A_i))=w(\mathcal{X}(\mathcal A_i))+(\binom{t}{2}-\binom{i+1}{2} + \frac{i}{3}) \cdot 9nB$ as discussed in \Cref{subsec:encoding}. By \Cref{lem:quadratization} we have $p(\mathcal{Y}_i)=w(\mathcal{Y}_i)+(\binom{i+1}{2} - \frac{i}{3}) \cdot 9nB$, and by construction we have $p(\mathcal{Z}_i)=w(\mathcal{Z}_i)$. Thus, altogether we have 
\begin{equation*}
\begin{split}
p(\mathcal{X}(\mathcal A_i) \cup \mathcal{Y}_i \cup \mathcal{Z}_i) \,\,&=\,\, w(\mathcal{X}(\mathcal A_i))+(\binom{t}{2}-\binom{i+1}{2} + \frac{i}{3}) \cdot 9nB\\ 
&+ w(\mathcal{Y}_i)+(\binom{i+1}{2} - \frac{i}{3}) \cdot 9nB \,\,+\,\,  w(\mathcal{Z}_i)\\
&= w (\mathcal{X}(\mathcal A_i) \cup \mathcal{Y}_i \cup \mathcal{Z}_i) + \binom{t}{2} \cdot 9nB = P\enspace.
\end{split}
\end{equation*} 
Thus the set of items $\mathcal{S}=\mathcal{X}(\mathcal A_i) \cup \mathcal{Y}_i \cup \mathcal{Z}_i$ is a solution for our \knapsack\ instance, and the lemma is proven. 
\end{proof}

\begin{lemma}
\label{lem:BackwardDirection}%
If there exists a subset of items $\mathcal{S} \subseteq \mathcal{X} \cup \mathcal{Y} \cup \mathcal{Z}$ with $w(\mathcal{S}) \leq W$ and $p(\mathcal{S}) \geq P$ then there is some $i \in \{0,\ldots,t-1\}$ for which $\mathcal A_i$ is a ``yes"-instance of \rss.
\end{lemma}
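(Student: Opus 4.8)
The plan is to show that every solution of the composed \knapsack\ instance must split, scale by scale, as $\mathcal{S}=\mathcal{Z}_i\cup\mathcal{Y}_i\cup\mathcal{X}(\mathcal A_i^*)$ for some index $i\in\{0,\ldots,t-1\}$ and some solution $\mathcal A_i^*\subseteq\mathcal A_i$ of the $i$-th \rss\ instance; then $\mathcal A_i$ is a ``yes''-instance and we are done. The three magnitudes $Z\gg Y\gg B$ built into the weights and profits are exactly what forces this layering, and the proof peels them off one at a time. As a preliminary normalization I would replace the given $\mathcal{S}$ by a subset of $\mathcal{X}\cup\mathcal{Y}\cup\mathcal{Z}$ of \emph{maximum} profit among all subsets of weight at most $W$. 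The given $\mathcal{S}$ witnesses that this maximum is at least $P$, so the new $\mathcal{S}$ still has $w(\mathcal{S})\le W$ and $p(\mathcal{S})\ge P$; and by maximality there is no $\mathcal{S}'$ with $w(\mathcal{S}')\le w(\mathcal{S})$ and $p(\mathcal{S}')>p(\mathcal{S})$, which is the final hypothesis of \Cref{lem:compatibility}.

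Next I would pin down the $Z$-digit. Every encoding and quadratization item has weight and profit below $Z$ (this is where $Z=\lg^2 t\,Y^2\,3^{\lg^2 t}$ being huge relative to the total profit of all encoding and quadratization items is used), while each index item $z^b_k$ has weight and profit both equal to $2^k\cdot Z$ plus a sub-$Z$ term, and satisfies $w(z^b_k)=p(z^b_k)$. Hence the ``$Z$-digit'' maps $w_Z$ and $p_Z$ are additive on $\mathcal{S}$, agree item by item, and the sub-$Z$ remainders of all chosen items together stay below $Z$; so $w_Z(\mathcal{S})=p_Z(\mathcal{S})$, and combining $w(\mathcal{S})\le W<tZ$ with $p(\mathcal{S})\ge P>(t-1)Z$ forces $w_Z(\mathcal{S})=p_Z(\mathcal{S})=t-1$. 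By \Cref{lem:index}, $\mathcal{S}\cap\mathcal{Z}=\mathcal{Z}_i$ for some $i\in\{0,\ldots,t-1\}$, which I fix, together with its bits $i(0),\ldots,i(\lg t-1)$.

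Now I would descend to the $Y$- and then $B$-scales. Subtracting the (equal) known weight and profit of $\mathcal{Z}_i$ from $\mathcal{S}$ reduces matters to a subset of $\mathcal{X}\cup\mathcal{Y}$ whose weight is at most $(T-w_Y(\mathcal{Z}_i))\cdot Y+(3t-3i-2)B$ and whose profit is at least that quantity plus $\binom{t}{2}9nB$; since quadratization weights are multiples of $Y$ and $(3t-3i-2)B<Y$, its $\mathcal{Y}$-part has $w_Y$-value at most $T-w_Y(\mathcal{Z}_i)$, giving $w_Y(\mathcal{S})\le T$. The remaining hypothesis $p_Y(\mathcal{S})\ge T$ and then $\mathcal{S}\cap\mathcal{Y}=\mathcal{Y}_i$ follow from the digit argument in the proof of \Cref{lem:compatibility} (\Cref{lem:algebraic} forces the base-$3$ digit pattern, and profit-maximality of $\mathcal{S}$ rules out replacing some $y^{1,1}_{k,\ell}$ by $\{y^{1,0}_{k,\ell},y^{0,1}_{k,\ell}\}$), so \Cref{lem:compatibility} applies and yields $\mathcal{S}\cap\mathcal{Y}=\mathcal{Y}_i$. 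Finally, with $\mathcal{S}\cap\mathcal{Z}=\mathcal{Z}_i$ and $\mathcal{S}\cap\mathcal{Y}=\mathcal{Y}_i$ fixed, I set $\mathcal{X}^*=\mathcal{S}\cap\mathcal{X}$ and use the identities $w(\mathcal{Z}_i)+w(\mathcal{Y}_i)=(t-1)Z+TY+i\cdot 3B$, $p(\mathcal{Z}_i)=w(\mathcal{Z}_i)$, and $p(\mathcal{Y}_i)=w(\mathcal{Y}_i)+(\binom{i+1}{2}-\frac{i}{3})9nB$ (from the proofs of \Cref{lem:compatibility} and \Cref{lem:quadratization}) to compute $w(\mathcal{X}^*)\le(3t-3i-2)B$ and $p(\mathcal{X}^*)\ge(3t-3i-2)B+(\binom{t}{2}-\binom{i+1}{2}+\frac{i}{3})9nB$. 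These are exactly the thresholds in \Cref{lem:encoding}, so $\mathcal A_i$ is a ``yes''-instance of \rss, as required.

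I expect the main obstacle to be the $Y$-scale step: verifying the hypotheses of \Cref{lem:compatibility} — in particular $p_Y(\mathcal{S})\ge T$ — for an arbitrary solution, because profit at the $B$-scale can leak upward into the $Y$-scale through the $i\cdot 3B$ offsets on the encoding items and the small profit corrections on the quadratization items. Controlling this leakage is precisely what the exact magnitudes $X=3tnB_n$, $Y=t^2\cdot 3nB$, $Z=\lg^2 t\,Y^2\,3^{\lg^2 t}$, and the scale-separation bounds (total encoding-item profit below $Y$, total encoding-plus-quadratization-item profit below $Z$) are designed for. The remaining steps are routine arithmetic together with direct appeals to \Cref{lem:index}, \Cref{lem:quadratization}, \Cref{lem:compatibility}, and \Cref{lem:encoding}.
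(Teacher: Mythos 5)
Your proposal follows essentially the same route as the paper's proof: replace $\mathcal{S}$ by a profit-maximal solution, invoke \Cref{lem:index} to pin down $\mathcal{S}\cap\mathcal{Z}=\mathcal{Z}_i$, invoke \Cref{lem:compatibility} to get $\mathcal{S}\cap\mathcal{Y}=\mathcal{Y}_i$, and then subtract $w(\mathcal{Z}_i)+w(\mathcal{Y}_i)$ and $p(\mathcal{Z}_i)+p(\mathcal{Y}_i)$ from $W$ and $P$ to land exactly on the thresholds of \Cref{lem:encoding}. If anything, you are more explicit than the paper about verifying the hypotheses $w_Y(\mathcal{S})\le T$ and $p_Y(\mathcal{S})\ge T$ of \Cref{lem:compatibility} (which the paper asserts without detail), so the argument is correct and matches the intended one.
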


\begin{proof}
Let $\mathcal{S}$ be a solution with $w(\mathcal{S}) \leq W$ and $p(\mathcal{S}) \geq P$. Let $\mathcal{X}^* = \mathcal{S} \cap \mathcal{X}$, $\mathcal{Y}^* = \mathcal{S} \cap \mathcal{Y}$, and~${\mathcal{Z}^* = \mathcal{S} \cap \mathcal{Z}}$. Then as $w(\mathcal{S}) \leq W < t \cdot Z$ and $p(\mathcal{S}) \le P < t \cdot Z$, we have by \Cref{lem:index} that $\mathcal{Z}^*=\mathcal{Z}_i$ for some~${i \in \{0,\ldots,t-1\}}$. Furthermore, as we can assume $\mathcal{S}$ is of maximal profit (\ie there is no set~$\mathcal{S}' $ with $w(\mathcal{S}') \le w(\mathcal{S})$ and $p(\mathcal{S'}) > p (\mathcal{S})$) we have $\mathcal{Y}^*=\mathcal{Y}_i$
by \Cref{lem:compatibility}. As shown in the proof of \Cref{lem:ForwardDirection}, we have $w(\mathcal{Z}_i)+w(\mathcal{Y}_i)= (t-1) \cdot Z + T \cdot Y +i \cdot 3B$. Thus, 
$$
w(\mathcal{X}^*) \leq W - w(\mathcal{Z}_i)  - w(\mathcal{Y}_i) = (3t-3i-2) \cdot B\enspace. 
$$
Moreover, by \Cref{lem:quadratization} we have $p(\mathcal{Y}_i)=w(\mathcal{Y}_i)+(\binom{i+1}{2} - \frac{i}{3}) \cdot 9nB$, and by construction we have $p(\mathcal{Z}_i)=w(\mathcal{Z}_i)$. Thus, 
\begin{equation*}
\begin{split}
p(\mathcal{X}^*) &\geq P - p(\mathcal{Z}_i)  - p(\mathcal{Y}_i) \\
& = P -  w(\mathcal{Z}_i)  - w(\mathcal{Y}_i) - \Bigl(\binom{i+1}{2} - \frac{i}{3}\Bigr) \cdot 9nB\\
& = P -  (t-1) \cdot Z  - T \cdot Y - i \cdot 3B - \Bigl(\binom{i+1}{2} - \frac{i}{3}\Bigr) \cdot 9nB\\
& = (3t-3i-2) \cdot B +  \Bigr(\binom{t}{2} - \binom{i+1}{2} + \frac{i}{3}\Bigr) \cdot 9nB\enspace. 
\end{split}    
\end{equation*}
It therefore follows by \Cref{lem:encoding} that instance~$\mathcal A_i$ is indeed a ``yes"-instance of \rss, and the lemma follows.  
\end{proof}

\begin{proof}[Proof of \Cref{thm:main}]
We presented above an algorithm that composes any~$t$ instances $\mathcal A_0,\ldots,\mathcal A_{t-1}$ of \rss\ into a single instance of \knapsack\ in polynomial-time. By \Cref{lem:ForwardDirection,lem:BackwardDirection}, the constructed \knapsack\ instance is a ``yes"-instance if and only if~$\mathcal A_i$ is ``yes"-instance of \rss\ for some $i \in \{0,\ldots,t-1\}$. Observe that total number of different weights in our constructed \knapsack\ instance is 
$$
w_{\#} \leq |\widetilde{\mathcal{A}}_n| + |\mathcal{Y}| + |\mathcal{Z}| = O(n^3+\lg^2 t)\enspace. 
$$
Thus our algorithm fulfills all requirements of a composition algorithm, as given in \Cref{def:composition}. The proof for $w_{\#}$ then follows by a direct application of \Cref{thm:NoPolyKernel}. 
The statement for $p_{\#}$ follows by applying the reduction from Polak \etal~\cite[Chapter~4]{PolakEtAl21} which reduces an instance with $w_{\#} = k$ different item weights to an instance with~$p_{\#}= k$ different item profits.
\end{proof}


\section{Polynomial Kernel for Parameter $w_{\#} + p_{\#}$}

In this section we present a polynomial kernel for \knapsack\ parameterized by~$w_{\#} + p_{\#}$, thereby proving \Cref{thm:secondary}. Our kernel is a direct generalization of the polynomial kernel for \subsum\ parameterized by $a_{\#}$ of Etscheid \etal~\cite{EtscheidKMR17}.

The presented kernel utilizes two classical results: First, we use the fact that integer programming is fixed-parameter tractable with respect to the number of variables (this was first shown by Lenstra~\cite{Lenstra83}, and the currently best known running time with respect to the number of variable is due to Reis and Rothvoss~\cite{ReisR23}):
\begin{theorem}[\cite{ReisR23}]
\label{thm:ILP}%
\textsc{Integer Linear Programming} with input size $s$ and $n$ variables can be solved in $ 2^{O(n\lg\lg n)} \cdot s^{O(1)}$ time.
\end{theorem}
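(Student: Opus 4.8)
The stated bound is the recent theorem of Reis and Rothvoss, so proving it from scratch amounts to reproving their work; the line of attack is the Lenstra--Kannan branch-and-bound scheme for integer feasibility, with the classical flatness theorem (which yields only a $2^{O(n\lg n)}$ running time) replaced by a polylogarithmic bound on the lattice covering radius in terms of the Kannan--Lov\'asz quantity, i.e.\ a proof of a quantitatively sufficient form of Dadush's Subspace Flatness Conjecture. As a first step I would apply the standard preprocessing to a system $Ax\le b$: bound vertex coordinates via Cramer's rule so the feasible region may be taken bounded, recursively strip off lower-dimensional faces, and apply a linear change of variables together with the L\"owner--John ellipsoid so that the problem becomes: given a well-rounded convex body $K\subseteq\mathbb{R}^n$ (via a separation oracle) and a lattice $\Lambda$ given by a basis, decide whether $K\cap\Lambda\neq\emptyset$. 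One must work with general lattices $\Lambda$, not just $\mathbb{Z}^n$, because the recursion below creates sub-lattices.

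Next, the branching recursion and the running-time count. At a node carrying an $n$-dimensional body $K$ and rank-$n$ lattice $\Lambda$, run Kannan's enumerate-or-find-a-flat-subspace subroutine: using exact shortest/closest-vector computations in $\Lambda$ and its dual (the Micciancio--Voulgaris $2^{O(n)}$ algorithm, whose cost is $2^{O(n)}s^{O(1)}$), either return a point of $K\cap\Lambda$, or return a subspace $W$ of dimension $d$ with $1\le d\le n$ such that $\Pi_W(K)\cap\Pi_W(\Lambda)$ contains at most $\gamma(n)^{d}$ points, where $\gamma$ is the flatness function established below. In the latter case enumerate those projected points and recurse on the $(n-d)$-dimensional slices $K\cap\Pi_W^{-1}(p)$ with lattice $\Lambda\cap W^{\perp}$ (up to a coset). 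The number of leaves is then at most $\prod_i\gamma(n)^{d_i}$ over a composition $\sum_i d_i=n$, hence $\gamma(n)^{O(n)}$; with $\gamma(n)=(\lg n)^{O(1)}$ this equals $2^{O(n\lg\lg n)}$, and multiplying by the $2^{O(n)}s^{O(1)}$ work per node gives the claimed time.

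The crux is the polylogarithmic subspace-flatness bound, where essentially all the new work lies. Writing $\mathrm{kl}(\Lambda,K)=\max_{\{0\}\neq W}\dim(W)/\lambda_1(\Pi_W\Lambda,\Pi_WK)$ for the Kannan--Lov\'asz number, the target is $\mu(\Lambda,K)\le(\lg n)^{O(1)}\cdot\mathrm{kl}(\Lambda,K)$: a body with no lattice point forces $\mu\ge 1$ after scaling, hence $\mathrm{kl}$ large, which by definition of $\mathrm{kl}$ produces the flat subspace above with $\gamma(n)=(\lg n)^{O(1)}$. I would prove this covering-radius bound by a dyadic/chaining argument over the torus $\mathbb{R}^n/\Lambda$: at each scale, control the number of relevant short lattice vectors via the Regev--Stephens-Davidowitz reverse Minkowski theorem (which bounds the Gaussian mass of any lattice none of whose sublattices are too dense), and convert point counts into a covering statement via a Rogers-type mean-value / discrepancy estimate, applying this recursively across the subspaces of $\Lambda$. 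The difficulty that makes this the main obstacle is exactly quantitative: pushing the loss down from the $\mathrm{poly}(n)$ of the original Kannan--Lov\'asz estimate to $\mathrm{polylog}(n)$ needs the full strength of reverse Minkowski and a delicate recursion, and this is the only step that is genuinely hard rather than bookkeeping.

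Finally I would verify the algorithmic realization: the flat subspace must be \emph{computed}, not merely shown to exist, within $2^{O(n)}s^{O(1)}$ per node --- by exact dual-lattice computations together with a search over the candidate dimension $d$ --- and one must check that the bit-lengths of the bodies and lattices produced down the recursion, as well as the oracle precision, stay polynomial in $s$. Combining these ingredients then yields the running time $2^{O(n\lg\lg n)}\cdot s^{O(1)}$.
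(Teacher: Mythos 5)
You should first note that the paper itself offers no proof of this statement: Theorem~\ref{thm:ILP} is imported verbatim from Reis and Rothvoss~\cite{ReisR23} and is used purely as a black box in the kernelization argument, so there is no internal proof to compare your attempt against. Judged on its own terms, your proposal correctly reconstructs the architecture of the cited work: the Lenstra--Kannan recursion on a well-rounded body and a general lattice, exact SVP/CVP computations in time $2^{O(n)}s^{O(1)}$ per node, branching on the lattice points of a projection onto a flat subspace, and the observation that a flatness function $\gamma(n)=(\lg n)^{O(1)}$ turns the $2^{O(n\lg n)}$ of Lenstra--Kannan into $2^{O(n\lg\lg n)}$. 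That is indeed the right skeleton, and your accounting of the leaves as $\gamma(n)^{\sum_i d_i}=\gamma(n)^{O(n)}$ is the right way to see where the $\lg\lg n$ comes from (plus the routine reduction from optimization to feasibility by binary search, and bit-length control down the recursion, which you flag).

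The genuine gap is the step you yourself identify as the crux and then dispatch in one sentence: the bound $\mu(\Lambda,K)\le(\lg n)^{O(1)}\cdot\mathrm{kl}(\Lambda,K)$, i.e.\ the $\ell_2$ Kannan--Lov\'asz subspace flatness statement with only polylogarithmic loss, together with an algorithmic version that actually finds the flat subspace (and bounds the number of lattice points in the projection by $(\lg n)^{O(d)}$) in singly exponential time. Saying you ``would prove this by a dyadic/chaining argument over $\mathbb{R}^n/\Lambda$ using reverse Minkowski and a Rogers-type mean-value estimate'' names plausible ingredients but does not constitute an argument: controlling the covering radius scale-by-scale requires the stable/canonical sublattice decompositions and the careful recursion across subspaces that form essentially the entire technical content of~\cite{ReisR23}, and none of that is carried out or even specified precisely (for instance, how the Gaussian-mass bounds of reverse Minkowski are converted into a covering statement, and how the polylogarithmic loss is preserved under the recursion rather than compounding to $\mathrm{poly}(n)$). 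As it stands the proposal is a correct roadmap to the literature rather than a proof; within the context of this paper the appropriate resolution is exactly what the authors do, namely cite~\cite{ReisR23}.
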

\noindent Second, we use the following theorem by Frank and Tardos~\cite{FrankT87}:
\begin{theorem}[\cite{FrankT87}]
\label{thm:loosing-weight}%
    There is an algorithm that, given a vector $w \in \mathbb{Q}^r$ and a natural number~$N$, computes in polynomial time a vector $\overline w \in \mathbb{Q}^r$ with $\|w\|_\infty \le 2^{4r^3} \cdot N^{r^2+2r}$ and $\sign (w \cdot b) = \sign (\overline w \cdot b)$ for every $b \in \mathbb{Z}^r$ with $\|b\|_1 \le N$.
\end{theorem}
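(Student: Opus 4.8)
The plan is to obtain this as the classical consequence of lattice basis reduction due to Frank and Tardos. The only external ingredient I would invoke as a black box is Lov\'asz's simultaneous Diophantine approximation theorem, itself a standard application of the Lenstra--Lenstra--Lov\'asz (LLL) algorithm to a suitable lattice: given $v \in \mathbb{Q}^d$ and a precision $0 < \varepsilon < 1$, one computes in polynomial time an integer $q$ with $1 \le q \le 2^{d(d+1)/4}\,\varepsilon^{-d}$ and an integer vector $p \in \mathbb{Z}^d$ with $\|q v - p\|_\infty \le \varepsilon$. Since multiplying $w$ by a positive rational changes neither $\sign(w\cdot b)$ nor $\sign(\overline w \cdot b)$, I may assume throughout that $\|w\|_\infty = 1$, so that the $p$ produced above satisfies $\|p\|_\infty \le q + 1$.

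I would prove the theorem by recursion on the dimension $r$, peeling off one coordinate hyperplane per level. For the step, apply the approximation to $w$ with $\varepsilon := 1/(N+1)$, obtaining $q$ and $p$ with $\|q w - p\|_\infty < 1/N$ and $\max\{q,\|p\|_\infty\} \le 2^{r(r+1)/4}(N+1)^r + 1$. Let $\Lambda := \{b \in \mathbb{Z}^r : p \cdot b = 0\}$, a lattice of rank $r - 1$; compute an (LLL-reduced, hence size-controlled) basis $v_1,\dots,v_{r-1}$ of $\Lambda$, observe that every $b \in \Lambda$ with $\|b\|_1 \le N$ is $b = \sum_j c_j v_j$ for an integer vector $c$ with $\|c\|_1 \le N'$ for some $N'$ bounded polynomially in $N$ and in the size of the basis, and put $w' := (w\cdot v_1,\dots,w\cdot v_{r-1}) \in \mathbb{Q}^{r-1}$. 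Recursively reduce $(w', N')$ to $\overline{w'} \in \mathbb{Q}^{r-1}$, lift it to $\hat w \in \mathbb{Q}^r$ by solving the (nonsingular) system $\hat w \cdot v_j = \overline{w'}_j$ for $\hat w \in \operatorname{span}(v_1,\dots,v_{r-1})$, and output $\overline w := C\cdot p + \hat w$, where $C > \|\hat w\|_1 \cdot N$. The base case $r \le 1$ is immediate (take $\overline w := \sign(w_1)$).

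Correctness is a two-case check for a query vector $b$ with $\|b\|_1 \le N$. If $p\cdot b \ne 0$ then $|p\cdot b| \ge 1 > |(qw - p)\cdot b|$, so from $q\,(w\cdot b) = p\cdot b + (qw - p)\cdot b$ and $q > 0$ we get $\sign(w\cdot b) = \sign(p\cdot b)$; and since $|C(p\cdot b)| \ge C > \|\hat w\|_1 N \ge |\hat w\cdot b|$ we also get $\sign(\overline w\cdot b) = \sign(p\cdot b)$, so the two agree. If $p\cdot b = 0$ then $b \in \Lambda$, write $b = \sum_j c_j v_j$; now $w\cdot b = w'\cdot c$ while $\overline w\cdot b = \hat w\cdot b = \overline{w'}\cdot c$, so the claim follows from the inductive hypothesis applied to $(w',N')$ and to the vector $c$ (which indeed satisfies $\|c\|_1 \le N'$). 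The recursion has depth at most $r$ and each level is polynomial time, so the whole procedure runs in polynomial time.

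The conceptual content ends here; the main work I would expect to spend is on the quantitative bookkeeping needed to land exactly at $\|\overline w\|_\infty \le 2^{4r^3}\cdot N^{r^2+2r}$. Concretely, one must (i) bound how the query parameter degrades from $N$ to $N'$ when passing to a reduced basis of $\Lambda$ (using that such basis vectors can be kept of size $2^{O(r^2)}N^{O(r)}$, e.g.\ via Hermite normal form), (ii) re-normalize the recursive input to infinity-norm $1$ at every level so the estimates compose cleanly, and (iii) multiply the $r$ per-level contributions $\|p\|_\infty$, $C$ and $\|\hat w\|_\infty$, each of the shape $2^{O(r^2)}N^{O(r)}$, and verify that the resulting product respects the stated bound. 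None of this affects the hyperplane-peeling recursion or the sign analysis above, which is where the real idea lies.
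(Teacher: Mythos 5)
The paper does not prove \Cref{thm:loosing-weight} at all---it is imported verbatim from Frank and Tardos~\cite{FrankT87}---so the only meaningful comparison is between your sketch and their original argument. Your overall strategy (normalize $\|w\|_\infty=1$, apply Lov\'asz's simultaneous Diophantine approximation with precision $1/(N+1)$, split the sign analysis according to whether $p\cdot b\neq 0$ or $p\cdot b=0$, and recurse) is the right family of ideas, and the two-case sign check is correct as far as it goes. But your recursion differs from theirs in a way that is not mere bookkeeping: Frank--Tardos never pass to a basis of the kernel lattice $\Lambda=\{b: p\cdot b=0\}$. They iterate on the residual $q w-p$ \emph{in the same ambient dimension}, using that the coordinate where $|w_i|=\|w\|_\infty=1$ forces $p_i=q$ and hence a new zero coordinate each round (so at most $r$ rounds), and they output a cascade $\sum_i M_i p_i$ of integer vectors with rapidly decreasing multipliers. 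This keeps the query parameter equal to $N$ in every round and is exactly what produces the clean bound $2^{O(r^3)}N^{O(r^2)}$ with the stated constants.

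The genuine gap in your variant is the quantitative control of the recursion, which you defer to ``bookkeeping'' but which is where this route can actually fail. Your claim that every $b\in\Lambda$ with $\|b\|_1\le N$ has coefficients $c$ with $\|c\|_1\le N'$ for $N'$ ``bounded polynomially in $N$ and in the size of the basis'' is, under either reading (Cramer-type bounds, or polynomial in encoding length), of the form $N'=2^{O(r^2)}N^{\Theta(r)}$; composing this over $r$ levels gives a query parameter $N^{r^{\Theta(r)}}$, which destroys not only the target bound $2^{4r^3}N^{r^2+2r}$ but even polynomial encoding length of the intermediate data. To rescue the scheme you need the sharper, LLL-specific fact that for a reduced basis of an \emph{integer} lattice (so $\lambda_1\ge 1$) the coordinates of $b$ are at most $2^{O(r)}\|b\|$, giving $N'\le 2^{O(r)}N$, and you need the lift $\hat w$ (via the inverse Gram matrix) controlled using $\prod_j\|v_j\|\le 2^{O(r^2)}\det\Lambda\le 2^{O(r^2)}\|p\|_2$. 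Even with these repairs, each level contributes both a factor $\|p\|_\infty\approx 2^{r^2/4}N^r$ and a Gram-inverse/lift factor of order $N^{\Theta(r)}$, so over $r$ levels you land at roughly $N^{cr^2}$ for a constant $c>1$ strictly larger than what the theorem states; the exact exponents $r^2+2r$ and $4r^3$ are an artifact of the residual-iteration scheme and are unlikely to drop out of the kernel-basis-plus-lift recursion. So either adopt the original iteration, or weaken the claimed bound to $2^{O(r^3)}N^{O(r^2)}$ (which would still suffice for this paper's application in \Cref{thm:secondary}, but is not the statement as given).
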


\begin{proof}[Proof of \Cref{thm:secondary}]
Let $w_1, \ldots, w_{w_{\#}}$ be the different weights and $p_1, \dots, p_{p_\#}$ be the different profits in a given \knapsack\ instance with $n$ items. We denote by $n_{i,j}$ for $i \in \{1,\ldots,w_{\#}\}$ and ${j \in \{1,\ldots,p_\#\}}$ the number of items with weight~$w_i$ and profit $p_j$. First note that the following is an ILP formulation of \knapsack\ with $w_{\#} \cdot p_{\#}$ many variables $x_{i,j}$, one for each $i \in \{1,\ldots,w_{\#}\}$ and~${j \in \{1,\ldots,p_\#\}}$, and two inequalities:
\begin{ILP}
\begin{aligned}\label{ILP}
\sum_{i=1}^{w_\#} \sum_{j=1}^{p_\#} x_{i,j} \cdot w_i &\le W \\
\sum_{i=1}^{w_\#} \sum_{j=1}^{p_\#} x_{i,j} \cdot p_j &\ge P \\
x_{i,j} &\in\{0,1,\ldots, n_{i,j}\}.
\end{aligned}
\end{ILP}
    
By \Cref{thm:ILP}, if $w_{\#} \cdot p_{\#} \cdot \lg (w_{\#} \cdot p_{\#}) \le \lg n$ (recall that $n$ denotes the total number of items of the \knapsack\ instance), then the instance can be solved in polynomial time using \Cref{ILP}. Thus, devising a polynomial kernel in this case is trivial. So assume that $w_{\#} \cdot p_{\#} \cdot \lg (w_{\#} \cdot p_{\#}) > \lg n$. To reduce the encoding length of \Cref{ILP}, we apply \Cref{thm:loosing-weight} to the first two inequalities of \Cref{ILP} as follows: Let $w^*$ be $w_\# \cdot p_\#$-dimensional vector whose $(p_\# \cdot (i -1) + j)$'th component equals~$w_i$, for each $i \in \{1,\ldots,w_{\#}\}$ and $j \in \{1,\ldots,p_\#\}$. We apply \Cref{thm:loosing-weight} to the $(w_\# \cdot p_\#+1)$-dimensional vector $w:= (w^*,-W)$ and $N := n+1$, resulting in a vector~$(\overline w^*, -\overline W)$. Similarly, let~$p^*$ be $w_\# \cdot p_\#$-dimensional vector whose $(p_\# \cdot (i - 1) + j)$'th component equals $p_j$, for $i \in \{1,\ldots,w_{\#}\}$ and $j \in \{1,\ldots,p_\#\}$. We apply \Cref{thm:loosing-weight} to the vector $w:= (-p^*, P)$ and $N := n+1$, resulting in a vector~$(-\overline p^*, \overline P)$. By \Cref{thm:loosing-weight}, \Cref{ILP} is equivalent to the following \Cref{ILP:reduced}:
\begin{ILP}
\begin{aligned}\label{ILP:reduced}
\overline w^* \cdot x &\le \overline W\\
\overline p^* \cdot x &\ge \overline P\\
x_{i,j} & \in \{0,1,\ldots,n_{i,j}\}
\end{aligned}
\end{ILP}By \Cref{thm:loosing-weight}, each number from \Cref{ILP:reduced} has encoding length $O(r^3 +r^2 \lg n)$ for $r = w_\# \cdot p_\#$.
Since $\lg n \le r \cdot \lg r$, it follows that the size of \Cref{ILP:reduced} is $O(r^4 \cdot \lg r)$. As \knapsack\ is \NP-complete, and Integer Linear Programming is in \NP, we can reduce \Cref{ILP:reduced} in polynomial time to an instance of \knapsack. The resulting \knapsack\ instance  is equivalent to the original instance, and has size polynomial in~$r = w_\# \cdot p_\#$.
\end{proof}
Using the same trick as Etscheid \etal~\cite{EtscheidKMR17} for \textsc{Subset Sum}, one can also make the reduction from \Cref{ILP:reduced} to \knapsack\ explicit, resulting in a kernel of size $\widetilde O (r^5)$.

\bibliography{knapsack}

\end{document}